\documentclass[12pt]{article}
\usepackage{graphicx,color}
\usepackage[utf8]{inputenc}
\usepackage{a4,amssymb,amsmath}

\usepackage{newtxtext}

\numberwithin{table}{section}

\newtheorem{thm}{Theorem}[section]  
\newtheorem{prop}[thm]{Proposition} 
\newtheorem{lemma}[thm]{Lemma}       
\newtheorem{coro}[thm]{Corollary}   
\newtheorem{remk}[thm]{Remark}      

\def\eref#1{(\ref{#1})}         
\def\sref#1{Sect.~\ref{#1}}
\def\aref#1{App.~\ref{#1}}
\def\pref#1{Prop.~\ref{#1}}
\def\lref#1{Lemma~\ref{#1}}
\def\rref#1{Remark~\ref{#1}}

\def\tbref#1{Table~\ref{#1}}

\def\cref#1{Cor.~\ref{#1}}

\def\fref#1{Footnote~\ref{#1}}

\def\mystrut#1#2{\vrule height #1pt depth #2pt width 0pt} 

\def\bea#1{\begin{eqnarray}\label{#1}}
\def\eea{\end{eqnarray}}
\def\ba{\begin{array}}
\def\ea{\end{array}}
\def\qed{\hfill $\square$}

\def\gg{{\mathfrak{g}}}
\def\hh{{\mathfrak{h}}}
\def\mm{{\mathfrak{m}}}

\def\eps{\varepsilon}
\def\ka{\kappa}
\def\la{\lambda}

\def\ga{\gamma}

\def\pa{\partial}
\def\sumno{\sum\nolimits}
\def\inv{^{-1}}
\def\ol{\overline}
\def\ul{\underline}
\def\ad{\mathrm{ad}}
\def\wt{\widetilde}
\def\wh{\widehat}
\def\sfrac#1#2{\hbox{\large{$\frac{#1}{#2}$}}}

\def\Vert{\big\vert}

\def\erw#1{\langle #1\rangle}
\def\Erw#1{\big\langle #1\big\rangle}
\def\ww{_{\rm int}}
\def\u{\mathfrak{u}}
\def\su{\mathfrak{su}}
\def\eins{\mathbf{1}}
\def\lra{\leftrightarrow}
\def\ddelta{\boldsymbol{\delta}}
\def\id{\mathrm{id}}
\def\ben{\begin{enumerate}}
\def\een{\end{enumerate}}
\def\bpm{\begin{pmatrix}}
\def\epm{\end{pmatrix}}
\def\halfquad{\,\,}
\def\ua{{\ul a\hspace{.8pt}}} 
\def\uc{{\ul c\hspace{.8pt}}}

\numberwithin{equation}{section}

\title{Hidden gauge invariance}

\author{Karl-Henning Rehren
\\[5pt]
{\small
Institut für Theoretische Physik, 
Georg-August-Universität Göttingen, 37077 Göttingen, Germany} \\
{\small krehren@uni-goettingen.de}}

\begin{document}

\maketitle

\begin{abstract}
  \noindent The role of gauge invariance is reconsidered by ``deriving
  it without assuming it'' within an autonomous approach to
  interactions of Standard Model particles. In this approach, the
  renormalizable interactions are purely constrained by quantum
  principles, notably the representation on a Hilbert space, which
  forces interactions to be ``string-localized''. To some surprise,
  most interactions fulfilling the constraints enjoy 
  an emergent but possibly ``hidden'' gauge invariance (uncovered via
  redefinitions of quantum fields). It is exact and unbroken 
  even in the presence of massive vector bosons. It plays a major role
  in proving that S-matrices are insensitive to the
  string-localization, and in fact coincide with S-matrices of local 
  interactions from the gauge theory approach on indefinite state
  spaces. Thus, particle physics with massless and massive vector
  bosons can be implemented without indefinite state spaces and ghosts.
\end{abstract}

\hfill {\sl In memory of Ivan T. Todorov (1933-2025)}

\section{Introduction}
\label{s:intro}
It is widely agreed that gauge invariance is extremely successful
as a criterium to select classical Lagrangians for the Standard Model
interactions. But several questions remain unanswered: \textit{Why} 
gauge invariance in the first place? Is there any operational meaning
of gauge invariance when gauge transformations do not affect observables?
Can the notorious trouble with canonical quantization of gauge potentials
(indefinite metric, unphysical degrees of freedom) be avoided?

Ever since Jordan \cite{Jo} and Dirac \cite{Di}, physicists and
philosophers of science (e.g., \cite{OP,FMS2,Ly,Ear1}) have worried about
these and related questions. Concerning the observable content of
classical gauge-invariant Lagrangians, an inspiring discussion (and an
extended guide to the literature) can be found in \cite{Fra}, see
\rref{rk:DFM} and \sref{s:conc}. 

We want to contribute a new point of view to the discussion. It is
``genuinely quantum'', namely, it is ultimately the need for a Hilbert 
space formulation of the quantum field theory (not addressed in
\cite{Fra}) that forces renormalizable interactions to be what they
are.
\footnote{\label{}The answer given in \cite{why}, based on the observation
  that a covariant Lorentz transformation law for a gauge potential
   in the Wigner Hilbert space of helicity $\pm1$ fails by a derivative
  term, so that the QED interaction $A_\mu j^\mu$ is Lorentz invariant only up to a
  total derivative, has quite some similarity to our more
  comprehensive answer.}
Gauge invariance rather \textit{emerges} as a (possibly
hidden) feature of admissible Hilbert space interactions.

The meaning of ``admissible'', and the first sketch of ``hidden gauge
invariance'' in Item~5 below require some explanations of the setup.
It proceeds from an ``autonomous approach'' to particle interactions.

\paragraph{1. The autonomous approach.}
The ``autonomous approach'' to the interactions of particles
(\cite{aut,GGM,chir,weak,infra}) results in a highly constrained method
to select admissible interaction densities, to be used in quantum
perturbation theory. It explains several pertinent physical features
of the Standard Model, where gauge theory falls short, see the quoted
literature and \rref{rk:dress}, \rref{rk:LV-YM}, and \aref{a:chir}. 

``Admissible'' means, first of all, that the interactions densities
are renormalizable and are defined on the physical Hilbert space of the
particles. (The decisive, most restrictive conditions will be stated below.)
For particles of helicity or spin 1, this means that one cannot use
local vector potentials: Covariant quantization of massless vector
potentials $A_\mu(x)$ requires state spaces of indefinite metric,
while the massive Proca vector field $B_\mu(x)$ does not admit
power-counting renormalizable interactions because its short-distance (UV)
fluctuations are too strong (the two-point function has scaling degree 2). 

The autonomous approach is committed to quantum principles (Hilbert
space, covariance and locality). The Hilbert space and covariance
principles provide substitutes for the problematic fields $A_\mu$ and $B_\mu$,
given by ``mildly non-local'' (``string-localized'') potentials, see
Item~2 below. However, the S-matrix must not depend on this
non-locality. This highly nontrivial condition replaces a ``gauge
principle'' to select the admissible interactions.

\paragraph{2. String-localization.}
String-localized potentials for massless vector bosons (``photons'')
and massive vector bosons (MVBs) are defined by the same formula
\bea{AF}
A_\mu(c,x) := I_c^\nu(F_{\mu\nu})(x) \equiv \int d^4y \,
c^\nu(y)F_{\mu\nu}(x+y), 
\eea
where $F_{\mu\nu}$ is the local Maxwell tensor defined on the Fock
space over the unitary Wigner representation of helicity $\pm1$ in the
massless case, and $F_{\mu\nu}:= \pa_\mu B_\nu-\pa_\nu B_\mu$ on the
Wigner Fock space of spin 1 in the massive case, satisfying the free
field equations. For massive vector
bosons, one defines also 
\bea{phic}
\phi(c,x):= I_c^\nu(B_\nu)(x)\equiv \int d^4y\, c^\nu(y) B_\nu(x+y).
\eea
The function $c^\mu(y)$ is required to satisfy
\bea{pac}
\pa_\mu c^\mu(y)=\delta(y),
\eea
which entails the identity for the integral transform
$I^\nu_c(f)(x) = \int dy\, c^\nu(y)f(x+y)$
\bea{Ipa}
I_c^\nu(\pa_\nu f) = \pa_\nu I_c^\nu(f) =-f.
\eea
The identity \eref{Ipa} in turn, together with the field equations
$\pa_\ka F_{\mu\nu}+\hbox{cycl.}=0$ for the local field tensor in both
cases (massless and massive) implies 
\bea{AcF}
\pa_\mu A_\nu(c,x)-\pa_\nu A_\mu(c,x)= F_{\mu\nu}(x).
\eea
In the massive case, it also follows
\bea{paphi}
A_\mu(c)=B_\mu+\pa_\mu\phi(c).
\eea

The potentials $A_\mu(c)$ have UV dimension 1 and serve as Hilbert
space counterparts of gauge potentials, admitting power-counting
renormalizable interactions
\footnote{\label{}That power-counting is applicable as a criterium for
  renormalizability of string-localized interactions, is a consequence
  of work by Gass \cite{G}.} 
among each other and with scalar fields and spinor currents. We denote
by $L\ww(c,x)$ the interaction densities.  

There are many functions $c^\mu$ satisfying \eref{pac}, but their supports
necessarily extend to infinity. Yet, they may (but need not) be narrow cones,
which explains the terminology ``string-localized''.

Under infinitesimal changes $\delta c^\mu$ of the function $c^\mu$,
one has $\delta_c(F)=\delta_c(B)=0$ and 
\bea{deltac}
\delta_c(A_\mu(c,x)) = \pa_\mu w(\delta c,x), \quad
\delta_c(\phi(c,x))= w(\delta c,x) ,
\eea
where $\phi(c)$ exists only if $m>0$. The explicit form $w\equiv w(\delta c)=
-I^\nu_c(\delta_c A_\nu(c))$ of the ``infinitesimal variation field''
\cite[Lemma~2.1]{weak} will never be needed.

\paragraph{3. String-localization and string-independence.}
String-localization is a problem because locality of the perturbation
theory is at stake with a non-local interaction density. This is the
origin of the most important constraints on admissible interactions:
the resulting S-matrix 
\bea{SL}
S_{L\ww(c)}= Te^{i\int dx\, L\ww(c,x)}
\eea
(also with source terms for observable fields) must be independent of the
string function $c^\mu$. We write this postulate as
\bea{SI}
\delta_c(S_{L\ww(c)}) \stackrel!= 0. \qquad \hbox{\sl
  (String-independence)}
\eea
A more ambitious demand is that there exists another
\textit{manifestly string-independent and   local} interaction density
$K\ww$ such that
\bea{SLSK}
S_{L\ww(c)} \stackrel!= S_{K\ww}. \qquad\hbox{\sl (Equality of
  S-matrices)}
\eea
The postulate \eref{SI} is truly autonomous: The fields that may
appear in $L\ww(c)$ are dictated by quantum principles, and the
postulate selects those $L\ww(c)$ as admissible, for which the
string-localization has no effect. In particular, there is no
reference to a ``gauge principle''.  

For \eref{SLSK} in contrast, one has in mind a local density $K\ww$ to
compare with, which may be provided by the standard local (gauge theory)
approaches to particle interactions. It is possibly not defined on a
Hilbert space (massless case), or non-renormalizable (massive case) or both. 

We emphasize that S-matrices \eref{SL} involve only the interaction
density $L\ww$ (a Wick polynomial in free quantum fields). A ``free
Lagrangian'' $L_0$ is not needed, and not present in the autonomous 
approach, because the free fields are not quantized by ``canonical
quantization'', but directly constructed from the Wigner
representations \cite{Wein}. (Yet, there will appear some $L_0$ in the
sequel of the paper (notably \cref{c:ILQ} and \sref{s:HGI}), but it is
just a tool to simplify the conditions for \eref{SI} and \eref{SLSK},
reducing a tedious explicit computation to a simple invariance argument.) 

We also emphasize that we analyze \eref{SI} and \eref{S=S} only at
\textit{tree-level}, since we are interested in \textit{necessary} conditions
to fix the interactions. The validity at loop level remains open, but we
expect that it can be imposed as a renormalization condition.

\paragraph{4. Obstructions.}
The constraints on $L\ww(c)$ (and $K\ww$)
arising from \eref{SI} and \eref{SLSK} can be analysized recursively 
order by order in perturbation theory. The basic mechanism is that
``total derivatives do not contribute'' to the integrals in the perturbative
expansion of the S-matrix. The ``initial'' first-order condition on
$L\ww(c)=gL_1(c) + \frac12g^2L_2(c)+\dots$
($g$ is a coupling constant) is therefore that the
$c$-dependence of $L_1(c)$ is a total derivative, Eq.~\eref{LQ1},
respectively $L_1(c)-K_1$ is a total derivative, Eq.~\eref{LV1}. (For
these features, the derivative in $\delta_c(A)=\pa w$ is
instrumental.) The problem is that time-ordering does not
commute with derivatives, which results in ``obstructions'' at higher
orders, see \sref{s:obst}. The latter must be cancelled, by adding 
higher-order interactions $L_2(c),\dots$, determined (``induced'') by
the obstructions that they must cancel. In principle, this might not
be possible; but if they exist, the obstructions are said to be
``resolved''.

Remarkably, several admissible cubic interaction densities $L_1(c)$
can be identified, describing self-interactions and interactions with
scalar fields and spinor currents. Resolvability at second order
typically constrains parameters present in $L_1(c)$, and then $L_2(c)$
exists. Moreover,  higher $L_n(c)$ ($n\geq3$) are not needed to resolve
higher order obstructions, and $L\ww(c)$ is renormalizable. It also
happens that for resolvability, one has to add another piece to $L_1(c)$,
and then proceed. E.g., self-interactions of massive vector bosons
necessarily require a coupling to a scalar field.

Going beyond the recursive method, we shall formulate in
\sref{s:obst} new necessary and sufficient conditions
for \eref{SI} and for \eref{SLSK} (\pref{p:ILQ} and \pref{p:ME}) 
at all perturbative orders at tree level. 

Even more remarkably, when the particle content is specified
(including the masses), all the interactions $L\ww(c)$ -- distinguished by
string-independence -- exhibit a strong similarity with the known
interactions of the Standard Model (electroweak, QCD) \cite{aut}.
We shall give a precise meaning to ``strong similarity'' in
\sref{s:HGI}: namely, there appears a ``hidden'' relation between
$L\ww(c)$ and a gauge-invariant classical Lagrangian, as illustrated
in Item~5 below. 

Together with systematic properties of the obstructions,
this relation is responsible that the conditions of
\pref{p:ILQ} and \pref{p:ME} are satisfied at all perturbative
orders (see Item~6 below and \sref{s:HGI}, and \sref{s:SM} that
this structure prevails in all SM interactions).

\paragraph{5. Hidden gauge invariance.}
To give an idea of ``hidden gauge invariance'', we present here a simple
example. For details, see \sref{s:SM}, Item~3. \textit{Mutatis
  mutandis}, the same structure is realized in several physically
important models, including all bosonic SM interactions, see \sref{s:SM}.   

The unique autonomous string-localized interaction between one massive
vector boson and one massive scalar particle \cite{Schroer,AHM}
is the string-localized version of the Abelian Higgs model
\cite{Hi}: 
\bea{}
\notag L\ww(c) &=& mg A_\mu(c)(B^\mu H +\phi(c)\pa^\mu H) +
\sfrac{g^2}2A_\mu(c)A^\mu(c) (H^2+m^2\phi(c)^2)\\  &&
- \sfrac12 m_H^2 \big(\sfrac gm
H(H^2+m^2\phi(c)^2) + \sfrac{g^2}{4m^2}(H^2+m^2\phi(c)^2)^2\big).
\eea
Here and everywhere below, Wick ordering is understood but will never
be written. 

``Hidden gauge invariance'' is the identity, valid in the Wick algebra,
\bea{hidden}
L_0 + L\ww(c)= L[A(c),\Phi(c)] \qquad \hbox{with}\quad
\Phi(c):=v+H+im\phi(c) \quad (v=\sfrac mg),
\eea
where $L_0=-\frac14 F_{\mu\nu}F^{\mu\nu} + \frac12m^2 B_\mu B^\mu +
\frac12\pa_\mu H\pa^\mu H -\frac12m_H^2 H^2$ is the free Lagrangian of
the massive Proca field $B$ and the Higgs scalar $H$ (lifted to a Wick
polynomial), and 
\bea{LAPhi-Intro}
L[A,\Phi] =-\sfrac14 F_{\mu\nu}F^{\mu\nu} +\sfrac12
(D_\mu\Phi)^*D^\mu\Phi -\sfrac{m_H^2}{8v^2}(\Phi^*\Phi-v^2)^2
\eea
is the (classical) gauge-theoretic Lagrangian of a massless gauge
potential $A$ coupled to a complex scalar field $\Phi$ with a double-well
potential.

$L[A,\Phi]$ is manifestly gauge-invariant as a functional
of $A$ and $\Phi$. Thus, the autonomous interaction $L\ww(c)$ enjoys a
hidden gauge invariance, revealed via \eref{hidden}. The massless
gauge potential $A$ in \eref{LAPhi-Intro} needs not be quantized,
because in \eref{hidden}, the quantum fields $A(c)$ and $\Phi(c)$ are inserted
into $L[A,\Phi]$. 

The gauge invariance of $L_0+L\ww(c)$ is exact and unbroken: The same
identity \eref{hidden} would hold with every $U(1)$-transform $A(c)+\pa\alpha$,
$e^{ig\alpha}\Phi(c)$ of $A(c)$, $\Phi(c)$, see \rref{rk:no-SSB}.

\paragraph{6. The new role of gauge invariance.}
Standing alone, a hidden gauge invariance as in \eref{hidden} would be
a ``nice-to-have'' feature of $L\ww(c)$, but useless. Instead, the point is that the
postulate of string-independence \eref{SI} at all perturbative orders can be
formulated as the invariance of $L_0+L\ww(c)$ under a derivation
$\delta_c+\omega_Q$, where $\delta_c$ is the string variation and
$\omega_Q$ is a map induced by the
obstructions of the model (\pref{p:ILQ}) -- and that this derivation
turns out to act like an infinitesimal gauge transformation
with string-dependent operator-valued gauge parameters
\footnote{\label{fn:strict}Strictly speaking, such transformations
  should not be called ``gauge transformations''. See \cite[Sect.~2]{Fra}.}
on the string-localized fields $A(c)$ and $\Phi(c)$ in (analogues of) \eref{hidden}.
In this way, hidden gauge invariance secures \eref{SI} at all orders, \pref{p:HGI-LQ}.

Similarly, \pref{p:ME} reformulates the equality of S-matrices \eref{SLSK}
in terms of a field homomorphism $e^{\omega_U}$, induced by the
obstructions of the model (\pref{p:ME}). This transformation takes
string-localized fields to gauge transforms (again with string-dependent
operator-valued parameters) of local fields in a gauge-invariant
$K\ww$. In this way, hidden gauge invariance also secures \eref{SLSK},
\pref{p:HGI-LV}.

Hidden gauge invariance and the nontrivial properties of obstruction
maps ``acting like gauge transformations'' as just
outlined, will be shown to prevail in all autonomous SM interactions,
which consequently provide an equivalent description of the SM. 

Gauge invariance, that is never \textit{imposed} in the autonomous
approach, is therefore not a ``first principle'', but an emerging
feature, that serves as a sufficient condition to make all
obstructions resolvable, precisely in the physically interesting cases.
An emphasis is that there is no ``Higgs  mechanism'':
not as a theoretical trick in the sense that gauge potentials are quantized
as massless, and ``turn massive'' by spontaneous symmetry breaking;
even less as a dynamical physical process. MVBs are massive from the outset,
with autonomous renormalizable interactions given in
terms of the string-localized quantum potentials \eref{AF} associated
with the Maxwell and Proca fields. 

\paragraph{7. Plan of the paper.}
In \sref{s:obst}, we briefly sketch the ``new organization'' of
obstructions and their cancellation, that leads to the present
results, and point out where it departs from the earlier treatments.
We then derive the new formulation (\pref{p:ILQ}) of 
the condition for string-independence \eref{SI} in terms of an ``obstruction
map'' $\omega_Q$. We also present the new formulation (\pref{p:ME})
of the condition for equality of S-matrices \eref{SLSK} in terms of another
obstruction map $\omega_U$, but we refer to \cite{sQCD} for the proof. 

In \sref{s:HGI}, we show how hidden gauge invariance is instrumental
to secure \eref{SI} and \eref{SLSK}.

In \sref{s:SM}, we address the hidden gauge invariance and ensuing
string-independence of all autonomous interactions of the Standard
Model.

\section{Obstruction maps and the resolution of obstructions}
\label{s:obst}
\paragraph{1. Obstructions.}
Perturbative QFT proceeds in terms of time-ordered products of
interaction densities, which by Wick's theorem can be expressed in
terms of time-ordered correlation functions (propagators) of free
fields. Propagators for derivatives of fields can be defined ``kinematically'':
$\erw{0\vert T[\pa\varphi(x)\chi(y)]\vert 0}:= \pa^x\erw{0\vert
  T[\varphi(x)\chi(y)]\vert 0}$
-- except when equations of motion allow to express $\pa\varphi$ in
terms of other fields, whose propagators are defined independently. In
this case, time-ordering does not commute with derivatives. For Wick
polynomials $Y$ and $X$, we define the ``obstruction''
\footnote{\label{}When $Y=j^\mu$ is a conserved current, the
  obstruction means the violation of a Ward identity.} 
\bea{Om}
O_\mu(Y(y),X(x)):= \big(T[\pa_\mu Y(y)X(x)] - \pa^y_\mu T[ Y(y)X(x)]\big)
\big\vert^{\rm tree}.
\eea

By Wick's theorem, the difference in the bracket can be expanded into
Wick products multiplied with products of $k$ propagators. The
tree-level contribution has only terms with $k=1$. Since we are
interested in \textit{necessary} conditions on interactions, the
tree-level analysis is justified as a first condition, while the extension
to $k>1$ (``loops'')  will require renormalization. We adopt here
the attitude that the preservation of the structures established at
tree level should be imposed as a renormalization condition at loop
level, that hopefully might fix infinitely many renormalization
parameters at all orders. We shall not delve deeper into this issue. 

By the restriction to tree level, obstructions for Wick products
can be computed ``factorwise'' in terms of two-point obstructions
$O_\mu(\varphi(y),\chi(x))$ of linear fields, which can be worked out
for the relevant linear fields of a model, see \aref{a:obst}. Two-point
obstructions of local fields can only be derivatives of $\delta(y-x)$,
while for string-localized fields, string-integrals over derivatives
of $\delta(x-y)$ may appear. 

We therefore assume that $O_\mu(Y(y),X(x))$ are known for each model,
except that there may be some free parameters in possibly
non-kinematic propagators of some linear fields of higher UV dimension
(such as the Proca field or the derivative of a scalar field).

Let us now study the appearance of obstructions in perturbation
theory. Let $\ddelta$ some infinitesimal variation of the linear fields,
extended as a derivation to the first-order interaction $L_1$. Assume
there is a vector-valued field $Q_1^\mu$ such that $L_1$ and $Q_1$
satisfy the ``initial condition''
\bea{LQ1}
\ddelta L_1(x)=\pa_\mu Q_1^\mu(x).
\eea
Then the first order of the S-matrix $S_{L\ww}$ is automatically
invariant:
\footnote{\label{}Throughout, we assume sufficiently rapid decay of the
  integrands such that the integral over a total derivative
  vanishes. This assumption is true for all $L(c)$ and $Q$ of
  interest, which have UV scaling
dimension $\geq 4$ and $\geq 3$, respectively.}
$$
ig\, \ddelta \int dx\, L_1(x)= \int dx \, \pa_\mu Q^\mu_1(x) =0.
$$
However, the second order is \textit{not} invariant:
\bea{O2}
\sfrac{-g^2}2\ddelta \iint dy \, dx\, T[L_1(y)L_1(x)] &=&
\sfrac{-g^2}2\iint dy\, dx\, T[\pa^y_\mu
Q^\mu_1(y)L_1(x)] + (x\lra y) \notag \\ &=& \sfrac{-g^2}2\int dx\int
dy\, O_\mu(Q^\mu_1(y),L_1(x)) + (x\lra y)\qquad
\eea
after subtraction of an integral over a total derivative. In previous
work, we have demanded that this obstruction can be cancelled
(``resolved'') by a higher-order interaction $\frac12g^2 L_2$, such that
the second-order contribution to $\ddelta S_{L\ww}$ is another derivative of
the form
\bea{resold}
i O_\mu(Q^\mu_1(y),L_1(x)) + (x\lra y)  + \ddelta L_2(x)\cdot
\delta(x-y)\stackrel! =  \sfrac12\pa_\mu^xQ_2^\mu(x,y)+(x\lra y).
\eea 
The obstruction would determine $L_2$ and $Q_2$ -- without a guarantee
that they exist.

\paragraph{2. The new organization of obstructions.}
The pattern for the cancellation of obstructions is not unique.
\eref{O2} is as well cancelled if 
\bea{resnew}
2i \int dy\, O_\mu(Q^\mu_1(y),L_1(x)) + \ddelta L_2(x) \stackrel! =
\pa_\mu Q^\mu_2(x).
\eea
This has the advantage that terms in $O_\mu(Q^\mu_1(y),L_1(x))$ that
are total $y$-derivatives need not be considered. In fact, such terms
\textit{do} appear in some models. With the symmetrized pattern \eref{resold},
they contribute to $Q_2$, where they may jeopardize the resolvability
of the obstruction at the next order. With the integrated pattern \eref{resnew},
they are systematically discarded. Moreover, $Q_n$ defined recursively
as in \eref{resnew} will have a single argument $x$, unlike in \eref{resold}.

Therefore, we adopt the new pattern \eref{resnew} and extend it to higher
orders. We define the ``integrated obstruction map'' for vector-valued
Wick polynomials $Y^\mu$
\bea{omega}
\omega_Y (X)(x) := i \int dy \, O_\mu(Y^\mu(y),X(x))
\eea
on Wick polynomials $X$, and rewrite \eref{resnew} as
\bea{ILQ2}
2\omega_{Q_1}(L_1)(x) + \ddelta L_2(x) \stackrel!= \pa_\mu Q^\mu_2(x).
\eea

The following proposition will pave the way to string-independence
(when $\ddelta=\delta_c$ is the string variation), but is formulated
in a context-independent way with an unspecified infinitesimal
variation $\ddelta$ of the fields.   
\begin{prop}\label{p:ILQ} (String-independence) If there exists a
  vector-valued Wick polynomial $Q^\mu$ of sufficiently rapid decay, satisfying
  \bea{ILQ}
  \ddelta L\ww \stackrel!= \pa_\mu Q^\mu - \omega_Q(L\ww),
  \eea
then the S-matrix
  $S_{L\ww}= T [e^{i\int dx\, L\ww(x)}]$ is invariant (at tree-level):
  \bea{SIgen}
  \ddelta S_{L\ww}=0.
  \eea
  The converse is true in the sense of power series, i.e., if
$L\ww=gL_1+\frac12g^2L_2+\dots$ and $\ddelta S_{L\ww}=0$, then
$Q=gQ_1+\frac12g^2Q_2+\dots$ exists satisfying \eref{ILQ}.

\end{prop}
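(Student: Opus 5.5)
The plan is to compute $\ddelta S_{L\ww}$ directly at tree level, order by order in the number of factors, and reduce it to integrals of total derivatives. Expanding $S_{L\ww}=\sum_n \frac{i^n}{n!}\int T[L\ww(x_1)\cdots L\ww(x_n)]$ and letting $\ddelta$ act as a derivation on the linear fields (which commutes with time-ordering because the propagators are fixed), the $n$-th term becomes $\frac{i^n}{(n-1)!}\int T[\ddelta L\ww(x_1)L\ww(x_2)\cdots L\ww(x_n)]$, using the symmetry of $T$. Substituting \eref{ILQ} splits this into a part with $\pa_\mu Q^\mu(x_1)$ and a part with $-\omega_Q(L\ww)(x_1)$. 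In the first part I write $T[\pa_\mu Q^\mu(x_1)\cdots]=\pa_\mu^{x_1}T[Q^\mu(x_1)\cdots]+(\hbox{obstruction terms})$. The total derivative integrates to zero because of the decay assumption. At tree level, the obstruction terms arise from exactly one contraction in which the derivative acts on the propagator between a field of $Q$ and one other factor. Hence they factorize as $\sum_{k\ge2}T[\,O_\mu(Q^\mu(x_1),L\ww(x_k))\prod_{j\ne1,k}L\ww(x_j)]$, where the two-point obstruction is inserted as a single vertex. This is the key structural step: a tree-level analogue of the factorization property mentioned after \eref{Om}, generalized from two points to $n$ points.

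After integrating over $x_1$, each obstruction term becomes $-i\,T[\omega_Q(L\ww)(x_k)\prod_{j\ne1,k}L\ww(x_j)]$ by the definition \eref{omega}. There are $n-1$ equal contributions. The first part at order $n$ therefore gives $\frac{i^n}{(n-1)!}(n-1)(-i)\int T[\omega_Q(L\ww)\,L\ww^{n-2}]$. The second part at order $n-1$ gives $-\frac{i^{n-1}}{(n-2)!}\int T[\omega_Q(L\ww)\,L\ww^{n-2}]$. Since $i^n(-i)=i^{n-1}$, the two cancel exactly. Summing over $n$ then yields $\ddelta S_{L\ww}=0$. The main obstacle I expect is justifying the $n$-point factorization. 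One needs that at tree level $T[\pa Y\, X_2\cdots X_n]-\pa T[Y X_2\cdots X_n]$ is a sum over single insertions, and that the $y$-integration of the obstruction produces a local Wick polynomial at $x_k$ that can be put back inside $T$ with the remaining factors without creating new obstructions or double counting. I would handle this via Wick's theorem restricted to connected tree diagrams, noting that the modified propagator appears on exactly one line.

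For the converse, I would argue inductively in $g$. Given $\ddelta S=0$ and $Q_1,\dots,Q_{n-1}$ already satisfying \eref{ILQ} up to order $g^{n-1}$, run the above computation with the truncated $Q$. All terms cancel except the order-$g^n$ residue $R_n:=\ddelta L_n+\sum(\hbox{combinatorial factors})\,\omega_{Q_k}(L_l)$ with $k+l=n$, which appears linearly in the first-order (single-vertex) part of the $g^n$ coefficient of $\ddelta S$. Its vanishing forces $\int dx\,R_n(x)=0$ as an operator identity for a Wick polynomial. I would then invoke the standard fact that a local Wick polynomial with vanishing integral is a total derivative, $R_n=\pa_\mu Q_n^\mu$, and set $Q_n$ accordingly. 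Checking the combinatorial factors against \eref{ILQ2} at $n=2$ serves as a consistency test.
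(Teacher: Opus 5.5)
Your proposal is correct and follows the paper's own proof. The paper likewise uses the derivation property to write $\ddelta S_{L\ww}=i\int dy\,T[\ddelta L\ww(y)\,e^{i\int L\ww}]$, and then invokes the integrated $n$-point Master Ward Identity \eref{MWI}, resummed into the exponential as \lref{l:MWIexp}; this is exactly your term-by-term cancellation between order $n$ and order $n-1$, and your flagged ``key structural step''. The converse is the same induction in $g$. One minor remark: state the factorization directly in integrated form, as in \eref{MWI}, because the unintegrated version is delicate when two-point obstructions contain derivatives of $\delta$-functions.
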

\textit{Proof:} Because $\ddelta$ is a derivation, we have 
\bea{dS}
\ddelta S_{L\ww} = \ddelta T [e^{i\int dx\, L\ww(x)}] &=& i \int dy\, T [\ddelta
L\ww(y) e^{i\int dx\, L\ww(x)}]. 
\eea
Therefore, \eref{ILQ} inserted into \eref{dS} implies
$\ddelta S_{L\ww} =0$ by \lref{l:MWIexp}.

Conversely, if $\ddelta
S_{L\ww}=0$, then by \eref{dS} and again \lref{l:MWIexp}, it holds for arbitrary $Q$
\bea{XX}
i \int dy\, T [(\ddelta L\ww+\omega_Q(L\ww) -\pa_\mu Q^\mu)(y)
e^{i\int dx\, L\ww(x)}] =0.
\eea

\newpage

Now, expand \eref{XX} in $g$. If $Q$ satisfies \eref{ILQ} until order
$g^{n-1}$, the exponential does not contribute to order $g^n$ of
\eref{XX}, which then asserts that $\ddelta L_n + \omega_Q(L\ww)^{(n)}$
is a total derivative, where the $n$-th order contribution
$\omega_Q(L\ww)^{(n)}$ involves only $Q_\nu$ ($\nu<n$). This
re-defines $Q_n$, so as to satisfy \eref{ILQ} at order
$g^n$. Thus, \eref{SIgen} implies \eref{ILQ} with $Q$ a power 
series in $g$. \qed 

The expansion of \eref{ILQ} gives back the first- and second-order conditions
\eref{LQ1} and \eref{ILQ2}, and at third order
\bea{ILQ3}
3 \omega_{Q_2}(L_1) +
3 \omega_{Q_1}(L_2)  + \ddelta L_3  \stackrel != \pa_\mu Q^\mu_3,
\eea
etc. These conditions recursively determine $L_n$ and $Q_n$ --
provided they exist. If they exist for all $n$, then $\ddelta S_{L\ww}=0$. The
point of \pref{p:ILQ} is, however, that one can find $Q$ that
solves \eref{ILQ} at all orders ``in one stroke'', see \rref{rk:Qwla}. 
In particular, proving \eref{ILQ} (or \eref{ILQ0} below) with $L_3=0$
includes proving \eref{ILQ3} with $L_3=0$.

The analogous proof of \pref{p:ME} (along similar lines) was given in \cite{sQCD}:
\begin{prop}\label{p:ME} (Equality of S-matrices) \cite[Cor.~2.5]{sQCD}
  If for two interactions $L\ww(x)$ and $K\ww(x)$, there exists a
  ``mediator field'' $U^\mu(x)$ satisfying the ``mediating equation'' 
  \bea{ME}
 e^{\omega_U}(L\ww)(x)-K\ww(x) = F(\omega_U)(\pa_\mu U^\mu)(x),
  \eea
  where $F(\omega) = \frac{e^\omega-1}\omega = 1+
  \frac12\omega+\frac16\omega^2+\dots$ as a power series of maps,
  then $L\ww$ and $K\ww$ define the same S-matrix (at tree-level): 
\bea{S=S}
Te^{i\int dx\,L\ww(x)}=Te^{i\int dx\,K\ww(x)}.
\eea
\end{prop}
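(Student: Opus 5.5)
We interpolate between the two interactions by a one-parameter family and show that the S-matrix is constant along it. The same mechanism as in \pref{p:ILQ} is used, now with the tree-level obstruction identity of \lref{l:MWIexp} as the basic tool. That identity lets one trade $\int dy\,T[\pa_\mu U^\mu(y)\,e^{i\int L}]$ for an insertion of $\omega_U(L)$ under the time-ordered exponential, up to integrals of total derivatives.

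\textbf{Interpolating family.} For $t\in[0,1]$ define
$$L_t := e^{t\omega_U}(L\ww) - t\,F(t\omega_U)(\pa_\mu U^\mu),$$
as a formal power series in the coupling. Here $U=gU_1+\dots$, so each order involves only finitely many terms. We have $L_0=L\ww$. By the mediating equation \eref{ME}, $L_1=K\ww$, since the $t=1$ term is exactly $e^{\omega_U}(L\ww)-F(\omega_U)(\pa U)$.

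\textbf{Flow equation.} Using $\frac{d}{dt}\big(tF(t\omega)\big)=e^{t\omega}$, differentiation gives
$$\frac{d}{dt}L_t=\omega_U(e^{t\omega_U}L\ww) - e^{t\omega_U}(\pa_\mu U^\mu).$$
We want to rewrite this as $\omega_U(L_t)-\pa_\mu U^\mu$, modulo terms that do not contribute. The discrepancy is
$$\omega_U\big(tF(t\omega_U)\pa U\big) - (e^{t\omega_U}-1)\pa U = t\,\omega_U F(t\omega_U)\pa U-(e^{t\omega_U}-1)\pa U = 0,$$
as an identity of power series in the single operator $\omega_U$. Hence $\frac{d}{dt}L_t=\omega_U(L_t)-\pa_\mu U^\mu$ holds exactly.

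\textbf{Invariance along the flow.} Since $\frac{d}{dt}$ acts on the interaction, we get
$$\frac{d}{dt}S_{L_t}= i\int dy\,T\Big[\tfrac{d}{dt}L_t(y)\,e^{i\int L_t}\Big]= i\int dy\,T\big[(\omega_U(L_t)-\pa_\mu U^\mu)(y)\,e^{i\int L_t}\big].$$
By \lref{l:MWIexp} applied with $Q=U$ and interaction $L_t$ (the content of \eref{XX}, valid for arbitrary $Q$), this vanishes at tree level. Integrating from $t=0$ to $t=1$ then yields \eref{S=S}.

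\textbf{Main obstacle.} The delicate point is applying \lref{l:MWIexp} with a $t$-dependent interaction and treating everything as formal power series. One must check that the lemma holds for any Wick-polynomial interaction that is a power series in $g$ and any vector field $U$ of sufficient decay. This is what makes the discarded total-derivative integrals vanish. One must also check that the $t$-integration commutes with the perturbative expansion; this is fine order by order, since each order is polynomial in $t$. A further subtlety is the precise normalization of $\omega$, including the factor $i$, relative to the lemma. If the lemma's sign or factor differs, I would adjust the sign of $U$; \eref{ME} would then fix the consistent convention.
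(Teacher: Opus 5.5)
Your proof is correct and follows essentially the route the paper points to (the proof in \cite{sQCD}, ``along similar lines'' as \pref{p:ILQ}): \eref{ME} is exactly the time-one solution of the linear flow $\frac{d}{dt}L_t=\omega_U(L_t)-\pa_\mu U^\mu$ with $L_0=L\ww$, and \lref{l:MWIexp} makes $\frac{d}{dt}S_{L_t}$ vanish just as it makes $\ddelta S_{L\ww}$ vanish in \pref{p:ILQ}. Your closing normalization worry is unfounded, since \eref{omega} and \lref{l:MWIexp} already use matching conventions, so no adjustment of $U$ is needed.
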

In the perturbative expansion of \eref{ME}, the ``initial''
first-order condition is
\bea{LV1}
L_1-K_1=\pa_\mu U^\mu_1.
\eea
\begin{remk}\label{rk:dress}
In string-localized QFT, because the interaction is not strictly
local, locality of the perturbative interacting fields is at stake.
Fortunately, a variant of \pref{p:ME} with source term insertions
in \eref{S=S} allows to address the issue \cite[Cor.~2.6]{sQCD}.
Namely, one can construct so-called ``dressed fields''
$e^{\omega_U}(\varphi)$ such that  
\bea{magic}
\varphi\vert_{L\ww(c)} = (e^{\omega_U}(\varphi))\vert_{K\ww}.
\eea
Unlike interacting fields, dressed fields contain no retarded integrals.
Thus, their localization relative to the free Wigner fields can be
manifestly read off, and determines the \textit{relative} localization of the 
 resulting interacting fields among each other (because the local
 interaction $K\ww$ in  \eref{magic} preserves relative
 localization). Local dressed fields correspond 
 to interacting observables, while string-localized dressed fields
 correspond to string-localized interacting fields.

 Especially dressed charged  fields turn out to be string-localized.
 E.g., the dressed Dirac field  is a free Dirac field with ``a photon
 cloud attached'' \cite{infra}.  (Dirac \cite{Di} has proposed a
 similar \emph{classical} object in order to quantize gauge-invariant
 quantities, but here it is derived ``from the quantum side''.) This
 is a most welcome and physically necessary feature, because in QED,
 the Dirac field \emph{must not} commute with the electric flux at
 spacelike infinity in order to comply with Gauss' Law. See
 \rref{rk:LV-YM} for more about ``dressed quantum fields''.  
\end{remk}
We present two simple corollaries to \pref{p:ILQ} and \pref{p:ME}, that
are interesting because their assumptions turn out to be satisfied
in many models that we are going to study, including all bosonic
interactions of the Standard Model. It is part of the scheme
how hidden gauge invariance secures string-independence and
coinciding S-matrices.

\newpage

\begin{coro}\label{c:ILQ}
  (i) Assume in \pref{p:ILQ} that there is a quadratic field $L_0$
  such that $\pa_\mu Q^\mu = -(\ddelta+\omega_Q)(L_0)$. Then the
  condition \eref{ILQ} is equivalent to
  \bea{ILQ0}
  (\ddelta+\omega_Q)(L_0+L\ww)\stackrel! =0.
  \eea
   (ii) Assume in \pref{p:ME} that there is a quadratic field $L_0$
  such that $\pa_\mu U^\mu = -\omega_U(L_0)$. Then the condition
  \eref{ME} is equivalent to 
  \bea{ME0}
  e^{\omega_U}(L_0+L\ww)\stackrel! =L_0+K\ww.
  \eea
\end{coro}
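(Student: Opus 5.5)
Both parts are direct algebraic rewritings, and the plan is to substitute the hypothesis on $L_0$ into the respective condition and use linearity of the maps involved.

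For (i), I use that $\ddelta$ (a derivation on Wick polynomials) and $\omega_Q$ (defined by \eref{omega} through the bilinear obstruction $O_\mu$, hence linear in its argument $X$) are both linear maps on Wick polynomials. Condition \eref{ILQ} reads $\ddelta L\ww + \omega_Q(L\ww) = \pa_\mu Q^\mu$. Inserting the hypothesis $\pa_\mu Q^\mu = -(\ddelta+\omega_Q)(L_0)$ on the right-hand side and moving everything to the left gives $(\ddelta+\omega_Q)(L\ww) + (\ddelta+\omega_Q)(L_0)=0$. By linearity this is $(\ddelta+\omega_Q)(L_0+L\ww)=0$, i.e.\ \eref{ILQ0}. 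Each step is reversible, so the two conditions are equivalent. The only point to state explicitly is that the same $Q$ enters both $\omega_Q$ and the hypothesis, which is how the corollary is formulated.

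For (ii), I start from \eref{ME}, namely $e^{\omega_U}(L\ww)-K\ww = F(\omega_U)(\pa_\mu U^\mu)$, and substitute $\pa_\mu U^\mu = -\omega_U(L_0)$. The key identity is
\[
F(\omega_U)\,\omega_U = \omega_U\, F(\omega_U) = e^{\omega_U}-1
\]
as power series of maps. This holds because $F(\omega)\omega = \sum_{n\ge0}\omega^{n+1}/(n+1)! = e^\omega - 1$, and all terms are powers of the single map $\omega_U$, so they commute. The right-hand side of \eref{ME} therefore becomes $-(e^{\omega_U}-1)(L_0) = L_0 - e^{\omega_U}(L_0)$. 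Rearranging and using linearity of $e^{\omega_U}$ gives $e^{\omega_U}(L_0+L\ww) = L_0+K\ww$, i.e.\ \eref{ME0}. Again every step is an equivalence.

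I expect the main obstacle to be making sense of the power series $e^{\omega_U}$ and $F(\omega_U)$. I would treat them as formal power series in the coupling $g$. If $U = gU_1+\dots$, then $\omega_U$ raises the order in $g$ by at least one, so every order of $g$ receives only finitely many terms. Since $\omega_U$ is linear, all manipulations, including the rearrangement of the series $F(\omega_U)\omega_U$, are legitimate order by order. Beyond this, both parts are purely algebraic and need no properties of time-ordered products.
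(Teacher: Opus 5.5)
Your proposal is correct and follows essentially the same route as the paper. The paper calls (i) trivial, and for (ii) it invokes exactly your identity $F(\omega_U)\circ\omega_U=e^{\omega_U}-\id$ after substituting $\pa_\mu U^\mu=-\omega_U(L_0)$. Your remark on handling these maps as formal power series in $g$ is a reasonable addition that the paper leaves implicit.
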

\textit{Proof:} The proof of (i) trivial. For (ii), use that
$F(\omega_U)\circ\omega_U=e^{\omega_U}-\id$ in \pref{p:ILQ}. \qed

\section{Hidden gauge invariance at work}
\label{s:HGI}

\paragraph{1. String-independence.}
The admissible autonomous interactions $L\ww(c)$ involving massless and
massive vector bosons and their obstruction maps $\omega_Q$ all enjoy
some remarkable properties that we call ``hidden gauge invariance''.
This is unexpected because $L\ww(c)$ were determined by imposing
string-independence \eref{SI} in lowest order of perturbation theory,
\textit{without} assuming any gauge invariance. See the comments in the
Conclusions, \sref{s:conc}. 

Hidden gauge invariance -- if it prevails -- is most efficient to
prove string-independence at all orders. While the first
order \eref{LQ1} is an initial condition and the second order
obstruction \eref{ILQ2} has to be resolved by hand in order to
determine $L\ww(c)=gL_1(c)+\frac12g^2L_2(c)$, all higher orders
are taken care of by hidden gauge invariance (including the proof that
all higher $L_n(c)=0$).

We present the argument in the setup of \pref{p:ILQ} independent of
the nature of the derivation $\ddelta$ (the string variation $\delta_c$ in the case of
string-localized QFT (sQFT)). The properties of $L\ww$ and $Q$ consist
of several parts: 
\ben \itemsep-1mm
\item[(P1)] It holds $\pa_\mu Q^\mu=-(\ddelta+\omega_Q)(L_0)$ for some
  quadratic Wick polynomial $L_0$.
\item[(P2)]
  There are a \textit{classical} Lagrangian $L[A,\Phi]$, and \textit{quantum}
  fields $\wh A$ and $\wh \Phi$ (functions of the fields in $L\ww$) such that
  it holds
\bea{hGI-LQ}
(L_0+L\ww)(x) = L[\wh A, \wh \Phi](x).
\eea
\item[(P3)]
$L[A,\Phi]$ is invariant under infinitesimal gauge transformations
$\delta_\lambda$ with arbitrary gauge parameters $\lambda$.
\item[(P4)] There exist Wick-algebra-valued, field-dependent gauge
  parameters $\lambda$ such that 
\bea{sGT}
  (\ddelta+\omega_Q)(\wh A) = \delta_{\lambda}(\wh A)
  ,\quad
  (\ddelta+\omega_Q)(\wh \Phi) = \delta_{\lambda}(\wh \Phi).
\eea
\een
\begin{remk}\label{rk:HGI}
(i) (P2) and (P3) explain the name ``hidden gauge invariance'': $L_0+L\ww$
is gauge invariant only through the appropriate identification of fields
$\wh A$ and $\wh \Phi$, see \eref{hidden} and the examples in \sref{s:SM}.
And only through (P1) and (P4), demanding properties of $\ddelta$ and
the obstruction map $\omega_Q$, hidden gauge invariance turns into a
powerful feature to secure $\ddelta$-invariance of the S-matrix,
cf.~\pref{p:HGI-LQ}. (See also \fref{fn:strict}.)
\\[1mm]
(ii) Calling the classical fields $A$ and $\Phi$ (suggesting a vector
potential and a scalar field) is just a matter of convenience that 
reflects their nature in the examples below. According to the
complexity of a model, a generalization with different fields
may be needed, and is possible.
\\[1mm]
\newpage

(iii) It is legitimate to insert quantum fields into a classical
  Lagrangian, because the fields in a Lagrangian do not satisfy any
  equations of motion, while the Wick algebra is a quotient of the
  classical field algebra by the free equations of motion. A classical
  Lagrangian is rather used to find the equations of motion by
  Hamilton's principle, and its free part is used to ``canonically
  quantize'' classical fields. In the present setup, neither Hamilton's
  principle is imposed, nor is there need of canonical quantization,
  because the free quantum fields are directly constructed from the
  Wigner representations of the particles. In particular, the symbol
  $\,\wh\cdot\,$ does not stand for ``canonical quantization''. In sQFT,
  $\wh A$ will be a function of the string-localized potential $A(c)$,
  and $\wh\Phi$ has to be determined in each case, see \eref{hidden}
  and the examples in \sref{s:SM}. 
\end{remk}
\begin{prop}\label{p:HGI-LQ}
  When $L\ww$ and $Q^\mu$ enjoy the properties (P1)--(P4), the
  all-orders condition \eref{ILQ} for string-independence of $S_{L\ww}$
  is automatically fulfilled. 
\end{prop}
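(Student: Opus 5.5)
The plan is to reduce the claim to \cref{c:ILQ}(i) and then to a chain-rule argument. By (P1), $\pa_\mu Q^\mu=-(\ddelta+\omega_Q)(L_0)$ with $L_0$ quadratic, so the hypothesis of \cref{c:ILQ}(i) holds. The all-orders condition \eref{ILQ} is then equivalent to
\[
(\ddelta+\omega_Q)(L_0+L\ww)=0 .
\]
By (P2) it suffices to show $(\ddelta+\omega_Q)\big(L[\wh A,\wh\Phi]\big)=0$.

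Next I would check that $\ddelta+\omega_Q$ is a derivation of the Wick algebra. For $\ddelta$ this holds by assumption. For $\omega_Q$ it follows from the tree-level restriction. The obstruction $O_\mu(Q^\mu(y),X(x))$ involves only one propagator, so it is computed factorwise from two-point obstructions $O_\mu(\varphi(y),\chi(x))$: exactly one factor $\chi$ of $X$ is contracted with one factor of $Q$. Integrating over $y$ as in \eref{omega} then shows that $\omega_Q$ acts on a Wick monomial $X$ by the Leibniz rule, $\omega_Q(X_1X_2)=\omega_Q(X_1)X_2+X_1\omega_Q(X_2)$. Its action on a linear field $\chi(x)$ is the Wick polynomial $\omega_Q(\chi)$, which is linear in $Q$'s other factors.

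Given this, apply the derivation $\ddelta+\omega_Q$ to the polynomial $L[\wh A,\wh\Phi]$, which is a Wick polynomial in $\wh A,\wh\Phi$ and their derivatives. The chain rule gives
\[
(\ddelta+\omega_Q)\, L[\wh A,\wh\Phi]
=\frac{\pa L}{\pa A}\Big|_{\wh A,\wh\Phi}\,(\ddelta+\omega_Q)(\wh A)
+\frac{\pa L}{\pa\Phi}\Big|_{\wh A,\wh\Phi}\,(\ddelta+\omega_Q)(\wh\Phi)
\]
(with appropriate handling of $\Phi^*$ and derivative arguments). For this one must know that $\ddelta+\omega_Q$ commutes with $\pa_\mu$, so that it acts on $D_\mu\wh\Phi$ and $F_{\mu\nu}[\wh A]$ through its action on $\wh A,\wh\Phi$. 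For $\ddelta$ this is clear. For $\omega_Q$ it follows from translation covariance of $O_\mu$: $\pa^x$ acting on $X(x)$ passes through the $y$-integral in \eref{omega}. By (P4) each variation equals $\delta_\lambda(\wh A)$, resp.\ $\delta_\lambda(\wh\Phi)$, so the right-hand side equals $\delta_\lambda L$ evaluated at $(\wh A,\wh\Phi)$. That vanishes by (P3), since the classical identity $\delta_\lambda L[A,\Phi]=0$ holds for arbitrary $\lambda$. This gives \eref{ILQ0}, hence \eref{ILQ} by \cref{c:ILQ}(i), and $\ddelta S_{L\ww}=0$ by \pref{p:ILQ}.

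The main obstacle is the last substitution step. In (P4) the gauge parameter $\lambda$ is operator-valued and field-dependent, whereas (P3) is a statement about classical, commuting parameters. I would argue that in the Wick algebra all fields (bosonic) commute inside normal products, so that the polynomial identity $\delta_\lambda L=0$, valid identically in independent commuting variables $A,\Phi,\lambda$ and their derivatives, remains valid after inserting Wick-algebra elements. This is the same reasoning as in \rref{rk:HGI}(iii): the Wick algebra is a quotient of the free commutative classical field algebra. Care is needed that the identity of (P3) is purely algebraic, i.e.\ not using equations of motion or integration by parts. It is, since gauge invariance of $L$ is pointwise. A secondary point is to verify the derivation property of $\omega_Q$ on products that contain derivative fields with non-kinematic propagators; there one relies on the factorwise computation of two-point obstructions of \aref{a:obst}.
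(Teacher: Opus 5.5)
Your proposal is correct and is essentially the paper's proof. (P1) with \cref{c:ILQ}(i) reduces \eref{ILQ} to \eref{ILQ0}, and then (P2), (P4), (P3) are applied in turn; your argument that the purely algebraic identity $\delta_\lambda L=0$ survives insertion of commuting Wick-algebra elements is what the paper compresses into the remark that gauge invariance of $L[A,\Phi]$ makes no assumption on the nature of the fields and gauge parameters.

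One side justification is wrong, however. Commutation of $\omega_Q$ with $\pa_\mu$ does not follow from translation covariance, because obstructions measure exactly the failure of $T$ to commute with derivatives. For example, on the Wick algebra $\pa_\kappa\pa^\kappa H=-m_H^2H$, so $\omega_Q(\pa_\kappa\pa^\kappa H)=-m_H^2\,\omega_Q(H)$, and in the abelian Higgs model this differs from $\pa_\kappa\omega_Q(\pa^\kappa H)$. You only need commutation on the first, respectively antisymmetrized, derivatives of $\wh A,\wh\Phi$ that enter $L$. There it holds with kinematic propagators, by inspection of \tbref{tb:2pt}, and the paper takes it tacitly as part of (P4) (cf.\ the remark in the proof of \lref{l:sGT} that $\delta_c$ and $\omega_Q$ respect the exterior derivative).
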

\textit{Proof:} By (P1) and \cref{c:ILQ}(i),  \eref{ILQ} is
equivalent to the condition \eref{ILQ0}. On the other hand,
$$(\ddelta+\omega_Q)(L_0+L\ww) \stackrel{\mathrm{(P2)}}=
(\ddelta+\omega_Q)(L[\wh A,\wh\Phi])\stackrel{\mathrm{(P4)}}=
\delta_{\lambda}(L[\wh A,\wh\Phi]) \stackrel{\mathrm{(P3)}}=0.$$
In the last step, it was used, that gauge invariance of $L[A,\Phi]$ makes no
specific assumptions about the nature of the fields and the gauge
parameters. Thus, condition \eref{ILQ0} is fulfilled. \qed
\begin{remk}\label{rk:no-SSB}
  The fact, that (in the examples with MVBs in \sref{s:SM}) the field $\wh \Phi$
  has a nontrivial vacuum expectation value, does not mean that there
  is ``spontaneous symmetry breaking''. Namely, $\wh A$ and $\wh \Phi$
  have no physical meaning: their only role is to assist in the proof
  of the condition \eref{ILQ} for string-independence. In fact, along with $(\wh 
  A,\wh \Phi)$, \emph{every} gauge transform of $(\wh A,\wh \Phi)$ --
  with a different, even $x$-dependent vacuum expectation value --
  would satisfy \eref{hGI-LQ} as well (because $L[A,\Phi]$ is gauge-invariant). 
  \end{remk}

  \paragraph{2. Equality of S-matrices.}
  We now show how hidden gauge invariance secures equality of S-matrices
  \eref{SLSK}, where $L_0$ from (P1) and $K\ww$ are functions of
  $\ddelta$-invariant fields only. (The latter is actually not needed --
  it only fixes the idea that the interaction $K\ww$ to compare with
  is manifestly $\ddelta$-invariant.) In sQFT, this means that $L_0$
  and $K\ww$ are functions of \textit{local} fields only.   

We have to assume that the fields involved in the interactions
$L\ww$ and $K\ww$ are defined on the same Fock space, so that
\eref{SLSK} is meaningful. In theories with \textit{massless} vector
bosons, the Hilbert space of $L\ww$ must be embedded into the
indefinite state space of the gauge theory interaction $K\ww$. This
requires a minor modification of \pref{p:ME} due the appearance of
null fields, cf.\ \cite[Sect.~4]{sQCD}. \pref{p:HGI-LV} below can be  
adapted accordingly. Further subtleties in the massless case arise due
to a logarithmic IR divergence. These will be briefly sketched 
in \rref{rk:LV-YM}. We ignore these subtleties
here and present the argument in a way which is litterally applicable
only when there are no massless vector bosons. 


On top of (P1)--(P2) (determining $L_0$ and $L[A,\Phi]$), we assume
\ben \itemsep-1mm
\item[(P5)]  There exists a mediator field $U^\mu$ such that $\pa_\mu
  U^\mu =-\omega_U(L_0)$. 
\item[(P6)] There exist ($\ddelta$-invariant) quantum fields $A_0$ and
  $\Phi_0$ such that $L_0+K\ww$ can be written as
  \bea{hGI-LV}
  (L_0+K\ww)(x) =L[A_0,\Phi_0](x).
  \eea 
\item[(P7)] $L[A,\Phi]$ is invariant under arbitrary \textit{finite} gauge
transformations $\alpha_\gamma$.   
  \newpage

\item[(P8)] There exist Wick-algebra-valued, field-dependent gauge
  parameters $\gamma$ such that 
  $$e^{\omega_U}(\wh A) = \alpha_\gamma(A_0), \quad e^{\omega_U}(\wh
  \Phi) = \alpha_\gamma(\Phi_0).$$
  \een
Again, these strong assumptions are only justified because they are
satisfied for the abelian Higgs model, and with appropriate modifications
also for models with massless vector bosons (YM and electroweak).

\begin{prop}\label{p:HGI-LV}
  When $L\ww$, $K\ww$ and $U^\mu$ enjoy the properties
  (P1)--(P3) and (P5)--(P8), the condition \eref{ILQ} for coinciding S-matrices
  $S_{L\ww}=S_{K\ww}$ is automatically fulfilled.
\end{prop}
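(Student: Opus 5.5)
The plan mirrors the proof of \pref{p:HGI-LQ}, with \cref{c:ILQ}(ii) and \pref{p:ME} taking the place of \cref{c:ILQ}(i) and \pref{p:ILQ}. (The statement cites \eref{ILQ}, but for the equality of S-matrices the relevant condition is the mediating equation \eref{ME}; that is what will be verified.) By (P5), the hypothesis of \cref{c:ILQ}(ii) holds with the same quadratic field $L_0$ as in (P1). Hence \eref{ME} is equivalent to
$$e^{\omega_U}(L_0+L\ww)=L_0+K\ww ,$$
which is \eref{ME0}, and it suffices to prove this identity. Once it holds, \pref{p:ME} gives $S_{L\ww}=S_{K\ww}$ at tree level.

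To prove \eref{ME0}, first use (P2) to rewrite the left-hand side as $e^{\omega_U}(L[\wh A,\wh\Phi])$. The key step is to pass $e^{\omega_U}$ inside the functional $L$:
$$e^{\omega_U}(L[\wh A,\wh\Phi])=L[e^{\omega_U}(\wh A),e^{\omega_U}(\wh\Phi)].$$
This requires $e^{\omega_U}$ to be multiplicative on the Wick products appearing in $L[\wh A,\wh\Phi]$, with derivatives commuting through. The intended justification is that $\omega_U$ acts as a derivation on Wick polynomials: at tree level, obstructions are computed factorwise from two-point obstructions, so the integrated obstruction map \eref{omega} obeys the Leibniz rule. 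The exponential of a derivation is then an algebra homomorphism, and this is exactly the sense in which the introduction speaks of a ``field homomorphism $e^{\omega_U}$''. Derivatives also commute, because $\omega_U(X)(x)=i\int dy\,O_\mu(U^\mu(y),X(x))$ is covariant in $x$, so $\omega_U(\pa X)=\pa\,\omega_U(X)$. With the homomorphism property in hand, (P8) turns the right-hand side into $L[\alpha_\gamma(A_0),\alpha_\gamma(\Phi_0)]$.

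By (P7), invariance under finite gauge transformations, this equals $L[A_0,\Phi_0]$. As in \pref{p:HGI-LQ}, this step uses that gauge invariance of the classical functional makes no assumption about the nature of the fields or of the (here Wick-algebra-valued, field-dependent) parameters $\gamma$. Finally (P6) identifies $L[A_0,\Phi_0]$ with $L_0+K\ww$, which establishes \eref{ME0}.

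The main obstacle is the homomorphism step. Two points need care: the derivation property of $\omega_U$ (in particular that $\omega_U$ of a constant, such as the vacuum expectation value inside $\wh\Phi$, vanishes), and the convergence of $e^{\omega_U}$. I would treat the latter as a formal power series in $g$, noting that $U=gU_1+\dots$ makes each order a finite sum. If the Leibniz rule only holds modulo loop terms, the tree-level restriction already built into \eref{Om} takes care of it. A further subtlety is that inserting gauge parameters $\gamma$ that do not commute with the fields into the classical invariance identity requires the Wick algebra to be commutative, which it is for bosonic Wick products.
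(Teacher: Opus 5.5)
Your proposal is correct and follows the paper's proof essentially verbatim: (P5) with \cref{c:ILQ}(ii) reduces \eref{ME} to \eref{ME0}, which then follows from the chain (P2), the homomorphism property of $e^{\omega_U}$, (P8), (P7), (P6). You are also right that the statement's reference to \eref{ILQ} should be read as \eref{ME}, since the paper's proof actually verifies \eref{ME0}; and your derivation-based justification of the homomorphism property supplies what the paper simply takes for granted.
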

\textit{Proof:} By (P5) and \cref{c:ILQ}(ii),  \eref{ME} is equivalent to
the condition \eref{ME0}. On the other hand, the remaining assumptions
and the fact that $e^{\omega_U}$ is a Wick algebra homomorphism,
allow to conclude 
\bea{ILQ-LV}
e^{\omega_U}(L_0+L\ww) &\stackrel{\mathrm{(P2)}} =&
e^{\omega_U}\big(L[\wh A,\wh \Phi]\big)
= L\big[e^{\omega_U}(\wh A), e^{\omega_U}(\wh \Phi)\big] \stackrel{\mathrm{(P8)}}=
L(\alpha_\gamma(A_0),\alpha_\gamma(\Phi_0)) \notag \\
&\stackrel{\mathrm{(P7)}}=& L[A_0,\Phi_0] \stackrel{\mathrm{(P6)}}=
L_0+K\ww.
\eea
This is \eref{ME0}. Then \cref{c:ILQ}(ii) implies the equality of S-matrices. \qed

\section{Standard model interactions}
\label{s:SM}
All Standard Model interactions involve massless particles of helicity
$\pm 1$ (photons, gluons) and/or massive vector bosons of spin 1 (MVBs:
$W$ and $Z$-particles).

We begin with the case of Yang-Mills theory (only massless vector bosons).
We then turn to the abelian Higgs model (one MVB, no photon), which is
instructive for the more general case: the electroweak interactions, treated last.

\paragraph{1. Yang-Mills.}
Autonomous interaction densities $L\ww(c)$ for self-interactions of
any finite number of massless particles of helicity $\pm1$ were classified in
\cite{GGM}. String-independence requires that the vector fields
$A_a(c)$ and their field tensors $F_a$ form adjoint multiplets
of a real Lie algebra $\gg$ with completely antisymmetric structure
constants $f_{abc}$.
\footnote{\label{}This means that $\gg$ is the Lie algebra of a
  subproup of some unitary matrix group. The generators $\tau_a$ are
  anti-selfadjoint matrices. Our conventions for $\u(1)$ and for
  $\su(2)$ will be $i\tau_0=\eins$ and $i\tau_j=\frac 12 \sigma_j$,
  such that $f_{ijk}=\eps_{ijk}$. \\ 
  In \cite{sQCD}, we rather used a convention with self-adjoint generators
  $\tau'_a=-i\tau_a$ such that $i[\tau'_a,\tau'_b]=\sum_cf_{abc}\tau'_c$. \\
  The present reformulation in terms of adjoint field multiplets
  avoids the somewhat weird notion of ``Lie-algebra-valued quantum fields''
  $X = \sum_a \tau'_a X_a$ in \cite{sQCD}.}
For adjoint field multiplets, we write
$([X,Y])_a:= \sum_{ab}f_{abc}X_bY_c$, and $\erw{X\vert Y} :=
\sum_aX_aY_a$, hence $\erw{[X,Y]\vert Z}=\erw{X\vert[Y,Z]}$.

The autonomous $L\ww(c)$ was found to be
(Wick ordering is understood, as always)
\bea{L-YM}
L\ww(c) = \sfrac g2 \Erw{F^{\mu\nu}\Vert
  [A_\mu(c),A_\nu(c)]}-\sfrac{g^2}4 \Erw{[A^\mu(c),A^\nu(c)]\Vert
  [A_\mu(c),A_\nu(c)]}.
\eea
These are the cubic and quadratic terms of the classical YM Lagrangian
\bea{GA}
L[A]= -\sfrac14 \erw{G_{\mu\nu}[A]\vert
G^{\mu\nu}[A] }, \qquad G_{\mu\nu}[A]\equiv\pa_\mu A_\nu-\pa_\nu A_\mu - ig
[A_\mu,A_\nu],
\eea
with the classical field $A$ replaced by the quantum
field $A(c)$ in the Wick algebra.

\newpage

In other words, the ``hidden'' gauge invariance (properties (P2)
and (P3)) is quite manifest with $L_0 = -\frac14\erw{F\vert F}$ and
$\wh A=A(c)$. (A field $\Phi$ as in \eref{hGI-LQ} is not needed for Yang-Mills.)
We verify also the properties (P1) and (P4).

(P1) is easy: For any ``skew-inert'' field $\rho_\nu$ (i.e.,
$O_{[\mu}(\rho_{\nu]}(y),X(x))=0$ for all fields $X$), and
\bea{Qrho}
Q^\mu = \Erw{F^{\mu\nu}\Vert \rho_\nu},
\eea
it holds (\cite[Lemma~A.3]{sQCD})
\bea{omega-YM}
\omega_Q(A_\mu(c)) = \rho_\mu+ \pa_\mu
I_c^\nu(\rho_\nu), \quad \omega_Q(F_{\mu\nu}) = \pa_{[\mu}\rho_{\nu]}.
\eea
Because $\delta_c(F)=0$ and
$\pa_\mu F^{\mu\nu}=0$, (P1) follows: 
\bea{dQ-YM}
(\delta_c+\omega_Q)(L_0) = -\Erw{F^{\mu\nu}\Vert
  \pa_\mu\rho_\nu}=-\pa_\mu Q^\mu.
\eea
Of course, $\rho_\nu$ must be further specified so that $Q_\mu$
fulfills also (P4). This is done by 
\begin{lemma}\label{l:sGT} Let $\lambda= \sum_{n}g^n\lambda_n$ and
  $\rho_\nu$ (both adjoint field multiplets) be power series in the
  coupling constant with Wick-algebra-valued coefficients, defined by
  \bea{lambdarho}
  \lambda_0=w, \qquad \lambda_{n+1}=
I_c([\lambda_n,A(c)]), \qquad \rho_\nu := g[\lambda,A_\nu(c)].
\eea
$\lambda$ is an inert field in the sense of \cite[Sect.~3.5]{sQCD}
(i.e., $O_\mu(\lambda_n(y),X(x))=0$ for all $X$ and all $n$), and
$\rho_\nu$ is skew-inert. The  derivation $\delta_c+\omega_Q$ with $Q$
given by \eref{Qrho}, acts on $A(c)$ and $F$ like 
\bea{sGT}
(\delta_c+\omega_Q) (A_\mu(c,x)) &=&\delta_\lambda(A_\mu(c,x))
:= \pa_\mu \lambda(x) + g [\lambda(x),A_\mu(c,x)],
\\ \notag
(\delta_c+\omega_Q)(F_{\mu\nu}(x)) &=& g\, \pa_\mu
[\lambda,A_\nu(c,x)] - ({\mu\leftrightarrow\nu}).
\eea
The map $\delta_\lambda$ (extended to the field algebra as a derivation)
has the same form as the usual infinitesimal classical gauge
transformation of a classical field $A$ with gauge parameters
$\lambda$.
\footnote{\label{}However, it differs from the gauge
  transformation in \cite{sQCD}, whose
  action on the string-localized potential is induced
  from the gauge transformation
  $\delta_\lambda(A)=\pa\lambda+g[\lambda,A]$ of the \textit{local}
  gauge potential $A$ via $A(c)=A+\pa I_c(A)$.}
\end{lemma}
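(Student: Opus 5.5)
Three things have to be shown, in this order: the inertness claims, the action on $A(c)$, and the action on $F$, which follows from the action on $A(c)$.

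\emph{Inertness.} I would proceed by induction on $n$. For $n=0$, $\lambda_0=w=w(\delta c)=-I_c^\nu(\delta_c A_\nu(c))$ is a string integral of $\delta_c A(c)$. By \eref{AF}, $\delta_c A$ is itself a string integral of $F$. Since $F$ has only kinematic two-point obstructions except through the Maxwell equation $\pa_\mu F^{\mu\nu}=0$ (\aref{a:obst}), I expect the obstruction $O_\mu(w(y),X(x))$ to vanish by the same mechanism used in \cite[Sect.~3.5]{sQCD} for the variation field. The fact $\delta_c(\phi(c))=w$ suggests $w$ is "$\phi$-like", i.e.\ a field whose derivatives are propagated kinematically.

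For the inductive step, $\lambda_{n+1}=I_c([\lambda_n,A(c)])$. Obstructions factorize at tree level, so $O_\mu(\lambda_{n+1},X)$ reduces to $I_c$ applied to $[\lambda_n,O_\mu(A(c),X)]$ plus a term involving $O_\mu(\lambda_n,X)=0$. The remaining issue is whether $O_\mu(I_c^\nu(Y_\nu),X)$ vanishes for the specific $Y_\nu=[\lambda_n,A_\nu(c)]$. Here I would use \cite[Lemma~A.3]{sQCD}-type formulas for obstructions of $A(c)$: they are of the form $\pa_\mu(\text{string integral of }\delta)$, and contracting with $c^\nu$ in $I_c$ should kill them via $c^\nu c_\nu$-type antisymmetry or \eref{pac}.

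Skew-inertness of $\rho_\nu=g[\lambda,A_\nu(c)]$ then reduces, by factorwise computation and inertness of $\lambda$, to $O_{[\mu}(A_{\nu]}(c),X)=0$. This holds because the $A(c)$-obstructions are pure gradients in the index of $A$.

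\emph{Action on $A(c)$.} By \eref{deltac}, $\delta_c A_\mu(c)=\pa_\mu w=\pa_\mu\lambda_0$. By \eref{omega-YM}, $\omega_Q(A_\mu(c))=\rho_\mu+\pa_\mu I_c^\nu(\rho_\nu)$, and $\rho_\mu=g[\lambda,A_\mu(c)]$. The remaining task is to check that
\[
\pa_\mu\lambda_0+g\,\pa_\mu I_c^\nu([\lambda,A_\nu(c)])=\pa_\mu\lambda .
\]
Expanding the power series, $g\,I_c([\lambda,A(c)])=\sum_n g^{n+1}I_c([\lambda_n,A(c)])=\sum_n g^{n+1}\lambda_{n+1}=\lambda-\lambda_0$. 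This is exactly what the recursion \eref{lambdarho} was designed for. Adding $\rho_\mu$ then gives $\delta_\lambda(A_\mu(c))$.

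\emph{Action on $F$.} Here $\delta_c F=0$ and $\omega_Q(F_{\mu\nu})=\pa_{[\mu}\rho_{\nu]}=g\,\pa_\mu[\lambda,A_\nu(c)]-(\mu\lra\nu)$, which is the claimed formula. As a consistency check, since $\omega_Q$ commutes with derivatives when applied to inert and skew-inert data, this agrees with $\pa_\mu$ of the $A$-variation antisymmetrized, by \eref{AcF}. The final remark, that $\delta_\lambda$ has the form of an infinitesimal gauge transformation, is then read off directly.

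\emph{Main obstacle.} I expect the hardest step to be the inertness of $\lambda_n$, in particular of $\lambda_0=w$. It requires the explicit two-point obstructions of string-localized fields and of their string integrals, and a check that these obstructions vanish and are not merely total derivatives. I would try first to reduce everything to the inertness statement for $I_c(X)$ with $X$ built from $A(c)$ and inert fields, quoted from \cite[Sect.~3.5]{sQCD}.
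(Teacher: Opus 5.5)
Your algebra for the two transformation laws is correct and is essentially the paper's. With \eref{deltac} and \eref{omega-YM}, everything reduces to $gI_c([\lambda,A(c)])=\lambda-w$, i.e.\ \eref{wGT}, which you verify directly from the recursion. The paper instead derives \eref{wGT} by applying $-I_c^\mu$ to the required identity, which also shows that $\lambda$ is uniquely forced. The law for $F$ then follows from $\delta_cF=0$ and $\omega_Q(F)=\pa_{[\mu}\rho_{\nu]}$.

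The genuine gap is your inductive proof that the $\lambda_n$ with $n\geq1$ are inert. Computed factorwise, the obstruction of $\lambda_{n+1}=I_c([\lambda_n,A(c)])$ is $\int du\,c^\nu(u)\,[\lambda_n(y+u),O_\mu(A_\nu(c)(y+u),X(x))]$. This involves the full, non-antisymmetrized obstruction of $A(c)$, and that does not vanish in general. The paper only asserts that $A(c)$ is skew-inert. For instance, in the massless case $\pa^\nu A_\nu(c)=I_c^\kappa(\pa^\nu F_{\nu\kappa})=0$ as a field, while the divergence of its propagator with $F$ or $A(c')$ inherits the nonzero $F$-obstructions of \tbref{tb:2pt}.

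The cancellation you hope for from the contraction with $c^\nu$ does not happen. An antisymmetry argument of the form $\int c^\nu(u)c^\kappa(u')(\ldots)_{\nu\kappa}=0$ needs the integrand to be symmetric under exchange of the two string variables. Here the factor $\lambda_n(y+u)$ sits at the outer string point and breaks that symmetry. \eref{pac} does not help either: it only controls divergences $\pa_\nu I_c^\nu$, whereas here the derivative index $\mu$ is free. For a sharp straight string $c^\nu(u)=\int_0^\infty ds\,e^\nu\delta(u-se)$ your mechanism does work, but only trivially, since then $e^\nu A_\nu(c)=0$ and all $\lambda_n$ with $n\geq1$ vanish. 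For smeared $c$, where the lemma has content, it fails.

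The missing idea is that inertness of the $\lambda_n$ is not derived from the factorwise rule at all. As for the exponent $\gamma$ in \cite[Sect.~3.5]{sQCD}, the paper assigns these string-integrated expansion coefficients kinematic propagators, $T[\pa_\mu\lambda_n(y)X(x)]:=\pa_\mu T[\lambda_n(y)X(x)]$, and regards this choice as legitimate. This is the sense of ``inert'' in the statement. Your closing fallback to \cite{sQCD} points the right way, but what has to be quoted is this choice of propagators, not a vanishing computation.

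With that input, skew-inertness of $\rho_\nu$ follows factorwise from inertness of $\lambda$ and $O_{[\mu}(A_{\nu]}(c),\cdot)=0$. This is the stated fact from \aref{a:obst}; it does not follow from the obstructions being ``pure gradients''. The rest of your argument then goes through.
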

\textit{Proof:} The proof will show that $\lambda$ given
by \eref{lambdarho} is unique to satisfy \eref{sGT},
while $\rho_\nu$ is unique only up to a derivative.

$\lambda$ is inert by the same argument for inertness of $\gamma$ in
\cite{sQCD}; namely, it is legitimate to choose kinematic propagators for the 
expansion coefficients $\lambda_{n}$. Then, the skew-inertness of $\rho_\nu$
follows as in \cite[Lemma~A.2]{sQCD}. 

With \eref{deltac} and \eref{omega-YM}, the claim \eref{sGT} is the
assertion that  
\bea{pawGT}
\delta_c(A_\mu(c)) + \omega_Q(A_\mu(c)) = \pa_\mu w + (\rho_\mu + \pa_\mu I_c(\rho))
&\stackrel!=& \pa_\mu\lambda + g[\lambda,A_\mu(c)],
\eea
which implies the second equation in \eref{sGT} because both $\delta_c$
and $\omega_Q$ respect the exterior derivative, see \eref{omega-YM}.
Applying $-I_c^\mu$ to \eref{pawGT} and using \eref{Ipa}, gives the implicit equation for $\lambda$
\bea{wGT}
\lambda -gI_c([\lambda,A(c)]) \stackrel!= w.
\eea
\eref{wGT} is solved by the power series $\lambda$ in \eref{lambdarho}.
Plugging \eref{wGT} into \eref{pawGT}, yields
\bea{rhoGT}
(\delta_\mu^\nu + \pa_\mu I_c^\nu)(\rho_\nu) = (\delta_\mu^\nu +
\pa_\mu I_c^\nu)(g[\lambda,A_\nu(c)]),
\eea
which is obviously solved by $\rho_\nu$ in \eref{lambdarho}.
\qed

By \lref{l:sGT} also (P4) is fulfilled, and by \pref{p:HGI-LQ} the S-matrix
is automatically string-independent at all orders. For the equality of
S-matrices, see \rref{rk:LV-YM}.

\paragraph{2. QCD.}
We add a minimal coupling $\wt L\ww(c)= g\erw{A_\mu(c)\vert j^\mu}$
to the Yang-Mills interaction, where $(j_a^\mu)_a$ is an adjoint
multiplet of conserved quark currents.
\footnote{\label{fn:curr}With a general ansatz for currents $j^\mu_a =
  \ol \psi \gamma^\mu T_a\psi$ coupled to massless vector bosons,
  string-independence at first and second order requires that the
  matrices $T_a$ must commute with the fermionic mass matrix (i.e.,
  quark masses are color-independent) and satisfy
  $i[T_a,T_b]=\sum_cf_{abc}T_c$, hence $iT_a=\pi(\tau_a)$ are a  
  representation of the Lie algebra, see \aref{a:chir}. Consequently, $j_a$ form an
  adjoint multiplet.}
Property (P1) is \textit{not} fulfilled in this case.
\footnote{\label{}This would require $\omega_{Q+\wt Q}(\wt
L_0) = -g\erw{\pa\lambda\vert j}$. The ``natural'' candidate for
$\wt L_0$ would be the free Dirac Lagrangian, but that one vanishes as
a quantum field. \cref{c:ILQ} seems to apply only for bosonic models.}
Yet, one can establish string-independence directly with \pref{p:ILQ}.

\begin{prop} \label{p:sQCD}
  Let $\wt Q^\mu =g\erw{\lambda\vert j^\mu}$ with $\lambda$ as in
  \lref{l:sGT}. The condition \eref{ILQ} for QCD with interaction 
  $L\ww(c)+\wt L\ww(c)$ is solved by $Q+\wt Q$. Hence, the S-matrix
  $S_{L\ww(c)+\wt L\ww(c)}$ is string-independent.
\end{prop}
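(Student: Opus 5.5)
We will verify \eref{ILQ} directly for the total interaction $L\ww(c)+\wt L\ww(c)$ with $Q+\wt Q$. By linearity of $Q\mapsto\omega_Q$, the condition reads
\[
(\delta_c+\omega_Q+\omega_{\wt Q})(L\ww+\wt L\ww) \stackrel!= \pa_\mu(Q^\mu+\wt Q^\mu).
\]
The pure Yang--Mills part is already settled. By (P1), \lref{l:sGT} and the proof of \pref{p:HGI-LQ}, we have $(\delta_c+\omega_Q)(L_0+L\ww)=0$, which is equivalent to $(\delta_c+\omega_Q)(L\ww)=\pa_\mu Q^\mu$. It therefore remains to show
\[
(\delta_c+\omega_Q)(\wt L\ww) + \omega_{\wt Q}(L\ww+\wt L\ww) \stackrel!= \pa_\mu\wt Q^\mu .
\]
We split this into the three contributions on the left and compute each.

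\emph{First contribution.} The current $j$ is local and string-independent, so $\delta_c(j)=0$. We also need $\omega_Q(j)=0$. Here $Q=\erw{F\vert\rho}$ contains no Dirac fields, and the obstructions of $F$ against $\psi$ vanish, so only obstructions between $\rho$ and $j$ could contribute. Since $\rho$ is built from $\lambda$ and $A(c)$, which have no nonvanishing two-point obstruction with $\psi$, we expect $\omega_Q(j)=0$. Then \lref{l:sGT} gives
\[
(\delta_c+\omega_Q)(\wt L\ww)=g\erw{\pa_\mu\lambda+g[\lambda,A_\mu(c)]\vert j^\mu}.
\]
On the other side, $\pa_\mu\wt Q^\mu=g\erw{\pa_\mu\lambda\vert j^\mu}$ because $j$ is conserved. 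The leftover term $g^2\erw{[\lambda,A_\mu(c)]\vert j^\mu}$ must therefore be cancelled by $\omega_{\wt Q}$.

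\emph{Remaining contributions.} The field $\lambda$ is inert, so the only obstructions in $\omega_{\wt Q}$ come from the factor $j^\mu$ in $\wt Q^\mu(y)$. These are the Ward-identity violations $O_\mu(j_a^\mu(y),X(x))$, nonzero only against Dirac fields, and they act like a local infinitesimal color rotation. Using \aref{a:chir} and footnote~\ref{fn:curr} ($iT_a=\pi(\tau_a)$, color-independent masses), we expect
\[
O_\mu(j^\mu_a(y),\psi(x))\propto \delta(y-x)\,T_a\psi(x).
\]
Integrating against $g\lambda_a(y)$ then gives $\omega_{\wt Q}(\psi)= -g\,\pi(\lambda)\psi$, with the sign and factors of $i$ to be fixed, and correspondingly for $\ol\psi$. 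Consequently, $\omega_{\wt Q}$ annihilates $L\ww$, which contains no Dirac fields, and acts on $j_a=\ol\psi\gamma T_a\psi$ as $\omega_{\wt Q}(j)=g[\lambda,j]$ via the commutator $[T_a,T_b]$. Hence
\[
\omega_{\wt Q}(\wt L\ww)=g^2\erw{A_\mu(c)\vert[\lambda,j^\mu]}=-g^2\erw{[\lambda,A_\mu(c)]\vert j^\mu},
\]
using the invariance $\erw{[X,Y]\vert Z}=\erw{X\vert[Y,Z]}$ together with antisymmetry. This exactly cancels the leftover term, proving the displayed condition.

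The main obstacle is computing $\omega_{\wt Q}(\psi)$ correctly. This requires the Dirac two-point obstructions from \aref{a:obst}, conventions consistent with $i[T_a,T_b]=\sum_c f_{abc}T_c$, and care with fermionic signs, so that the coefficient comes out as precisely $g[\lambda,\cdot\,]$. A secondary check is that $\lambda$, being an infinite power series, stays inert within these tree-level obstructions, so that no terms of higher order in $g$ spoil the all-orders identity; this is exactly why no $L_n$ with $n\ge 3$ is needed. String-independence of $S_{L\ww(c)+\wt L\ww(c)}$ then follows from \pref{p:ILQ}.
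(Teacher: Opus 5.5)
Your proposal is correct and follows the paper's proof. You split off the Yang--Mills part via (P1), and use \lref{l:sGT} (equivalently \eref{pawGT}/\eref{wGT}) for $(\delta_c+\omega_Q)(\wt L\ww)$. You use conservation of $j$ for $\pa_\mu\wt Q^\mu$ and inertness of $\lambda$ for $\omega_{\wt Q}(L\ww)=0$. The remaining input is $\omega_{\wt Q}(\wt L\ww)=-g^2\erw{[\lambda,A(c)]\vert j}$, which the paper simply cites from \cite[App.~A.2]{sQCD}. Your version also correctly keeps the $\delta_c(\wt L\ww)$ term, which the paper's display \eref{ILQ-QCD} leaves implicit.

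The one thing to fix is your tentative sign on the Dirac field. Consistency with $\omega_{\wt Q}(j)=g[\lambda,j]$, and with \tbref{tb:jj}, requires $\omega_{\wt Q}(\psi)=+ig\sumno_a\lambda_aT_a\psi=+g\sumno_a\lambda_a\pi(\tau_a)\psi$. With the minus sign you wrote, you would get $\omega_{\wt Q}(j)=-g[\lambda,j]$, and the leftover term would double instead of cancelling.
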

The abelian case (QED, $\lambda=w$) is of course included.

\textit{Proof:} Because $L\ww(c)$ and $Q$ separately satisfy the
condition \eref{ILQ}, we have to consider only the additional
obstructions arising from the fermionic couplings: 
\bea{ILQ-QCD}
\omega_{\wt Q}(L\ww) + \omega_Q(\wt L\ww) + \omega_{\wt Q}(\wt L\ww)
- \pa_\mu \wt Q^\mu\stackrel!=0.
\eea
Using \eref{omega-YM}, \eref{deltac}, and the inertness of $\lambda$,
as well as $\omega_{\erw{\lambda\vert j}}(\erw{A(c)\vert j}) =
- \erw{[\lambda,A(c)]\vert j}$ (\cite[App.~A.2]{sQCD}), we see that
\bea{lala}
g\Erw{\pa_\mu w + \rho_\mu + \pa_\mu I_c(\rho) -\pa_\mu\lambda -
  [\lambda,A_\mu(c)]\Vert j^\mu}\stackrel!=0
\eea
must vanish. This is true by \eref{wGT}. \qed

\begin{remk} \label{rk:Qwla} 
The all-order condition \eref{ILQ} is solved, both for Yang-Mills and
QCD, by
\bea{Qwla}
Q=\sumno_a\lambda_a(g) \sfrac{\pa L_1(c)}{\pa A_a(c)}
\eea
with $\lambda(g)=\sum_ng^n\lambda_n$ given by \lref{l:sGT}.
Indeed, if one were to prove \pref{p:sQCD} with an ansatz
$Q+g\erw{\lambda'\vert j}$ with $\lambda'$ chosen independently of
$\rho=g[\lambda,A(c)]$ in \lref{l:sGT}, then the condition \eref{lala}
would fix $\lambda'=\lambda$. The same is true for all models
considered in this paper.
\footnote{\label{fn:Qwla}We don't have a general proof of this
  feature. But for general arguments for the first and second orders,
  see \cite[Lemma~B.1, B.2]{weak}, where the argument for $Q_2$ has to
  be adapted to the integrated reformulation of the present paper:
  in particular, with kinematic propagators, the term $Q_2\vert_\delta$ in
  \cite{weak} is absent. (The prefactor of the sum in
  \cite[Eq.~(B.1)]{weak} is a typo. It should be $+2$.)}  

\eref{Qwla} allows to compute the map $\omega_Q$ in closed form, see
\tbref{tb:omQ-AHM} and \tbref{tb:omQ-EW}. This is crucial to establish
(P1) and (P4) in \pref{p:HGI-LQ}, from which one concludes 
  that all obstructions at order $n>2$ can be resolved with $L_n=0$,
  without explicitly computing them. 
  \end{remk}
  \begin{remk}\label{rk:LV-YM}
  The mediating equation \eref{ME} for the
  equality of S-matrices of $L\ww(c)$ and the gauge-theoretic
  (indefinite-metric) interaction $K\ww$ was solved in \cite{sQCD},
  both for Yang-Mills and QCD. A minor modification is necessary
  because in $K\ww$, $\pa_\mu  F^{\mu\nu}= -\pa^{\nu}N$ rather than
  zero, where $N=\pa^\mu A_\mu$ is a null field. The mediator field
  $U^\mu$ involves a \emph{massless} field $\phi(c):=I^\nu_c(A_\nu)$
  that can be defined (by choice of the string
  function $c^\nu$) on a positive-definite subspace of the indefinite
  Fock space of the gauge vector potential $A$. It carries
  longitudinal gluon degrees of freedom. The mediator field transfers
  these degrees of freedom to the dressed fields (cf.\
  \rref{rk:dress}) through a ``smeared Wilson operator''
  $W(c)=e^{i\gamma(c)}$, where $\gamma(c)$ is a string-localized inert
  field. In the abelian case (QED), $\gamma(c)=g\phi(c)$.

When the subtleties of null fields are properly taken into account,
hidden gauge invariance as in (P5)--(P8) still prevails, see
\cite[Prop.~4.2]{sQCD}: With $\wh A=A(c)$ and $A_0=A$, 
\bea{dress-YM}
e^{\omega_U}(A_\mu(c)) = W(c)(A_\mu-ig\inv\pa_\mu)W\inv(c)\eea
is a field-valued gauge-transform  of the indefinite-metric gauge
potential $A$. The operator $W(c)$ and the dressed fermion fields
$e^{\omega_{U+\wt U}}(\psi) = \pi(W(c))\psi$ of QED and QCD are
string-localized.  

Another subtlety arises due to a logarithmic IR divergence of the massless
fields $\phi(c)$ and $\gamma(c)$. Fortunately, these fields appear in
the dressed fields
only in exponential form, where the IR divergence can be controlled.
This was done for the abelian case (QED) in \cite{infra}, with
physically important consequences: Interacting Dirac fields 
include a ``photon cloud'' extending to infinity, where it 
accounts for Gauss' Law; the photon cloud contributes to the energy
of the electron, which consequently is no longer a sharp mass
eigenstate (``infraparticle''); and the ``profile'' of its asymptotic
electric field is superselected. These results, that cannot be obtained
with local perturbation theory,  are in accord with previous axiomatic
results for QED \cite{Bu,FMS1}. The non-locality of the dressed Dirac field
also supports the explanation of the Aharonov-Bohm effect by the
electromagnetic field of the electron including longitudinal photons
(e.g., \cite{Sa,Ear2}), rather than a proof that gauge potentials are
``real''. (There is no conflict with Einstein causality, because the
photon cloud was created ``in the past''.) 
\end{remk}

\paragraph{3. The abelian Higgs model.}
We present the  abelian Higgs model (AHM) as a simple prototype for
nontrivial hidden gauge invariance. The general structure is the same
in the electroweak interaction (Item~4 below), but the computations
are most transparent in the AHM.

Free massive vector bosons are described by the local Proca field $B_\mu$
satisfying the Klein-Gordon equation with mass $m$ and
\bea{eom-Proca}
\pa_\mu B^\mu=0, \quad \pa_\mu F^{\mu\nu}=-m^2 B_\mu, \quad
F_{\mu\nu}=\pa_{[\mu}B_{\nu]}.
\eea
The string-localized fields are $A_\mu(c)$ and $\phi(c)$ as in
\sref{s:intro} with equations of motion
\bea{eom-AHM}
\pa_{[\mu}A_{\nu]}(c)=F_{\mu\nu}, \quad \pa_\mu A^\mu(c) = -m^2 \phi(c), \quad
\pa_\mu\phi(c) = A_\mu(c)-B_\mu.
\eea
The free Higgs field is a canonical scalar field $H$ of mass $m_H$.

$B$, $F$, and $H$, are string-independent local fields, while the
string variations of $A(c)$ and $\phi(c)$ were given in \eref{deltac}.

The autonomous interaction density $L\ww(c)$ for the interaction of
a single massive vector boson with a scalar was determined in
\cite{Schroer,AHM}. The improvements of the present paper,
outlined in \sref{s:obst}, change the computation in several
ways. First, the integration in \eref{omega} deletes total derivatives 
in the variable $y$, while they survived in \cite{AHM} due
to the symmetrization in $x$ and $y$. Thus, $Q_2$ in \cite{AHM} is 
now absent. In \cite{AHM}, because the presence of $Q_2$ would have
prevented the resolution of higher-order obstructions, the authors
were forced to adopt nontrivial renormalizations of  propagators
$\erw{0\vert T[BB]\vert 0}$ and $\erw{0\vert T[\pa H\pa H]\vert 0}$
($c_B=c_H=-1$). These are now obsolete. Instead, we work with
kinematic propagators ($c_B=c_H=0$). As a consequence, some extra
terms $L_2^*(c)$ must be included into the second-order interaction
$L_2(c)$. They can also be found in \cite[Eq.~(3.14)]{AHM}.

To conclude: The cubic interaction $L_1(c)$ and $Q^\mu=gQ^\mu_1$
are the unique solution to the initial condition \eref{LQ1} with the
given field content, and the quartic part $L_2(c)$ (including
$L_2^*(c)$) is determined by string-independence at second order
\eref{ILQ2} with kinematic propagators. $Q^\mu$ satisfies \eref{Qwla}
in the abelian case $\lambda=w$.
\begin{prop}\label{p:AHM-LQ} The abelian Higgs model with $L_0$,
  $L\ww(c)$, and $Q$ given by
\footnote{\label{FF}The term $-\frac14 FF$ is not needed because it is
separately invariant under $\delta_c$ and $\omega_Q$, and separately
gauge invariant. We include it here for the sake of congruence
with nonabelian models, where it is not invariant under $\omega_Q$.}
  \bea{Ltot-AHM}
L_0 &\!\!=\!\!& -\sfrac14 F_{\mu\nu}F^{\mu\nu}+\sfrac12 m^2 B_\mu B^\mu +
\sfrac12 \pa_\mu H\pa^\mu H - \sfrac12 m_H^2 H^2, \notag \\ 
L_0+L\ww(c) &\!\!=\!\!& -\sfrac14 F_{\mu\nu}F^{\mu\nu} + \sfrac12
m^2\big(B+\sfrac gm A(c)H\big)^2
+ \sfrac12 \big(\pa H + mg A(c)\phi(c)\big)^2 - V(H,\phi(c)), \notag
\\ &&\hbox{where} \quad V(H,\phi) =
\sfrac12m_H^2\big(H+\sfrac g{2m}(H^2+m^2\phi^2)\big)^2, \\ \notag
Q^\mu &\!\!=\!\!& mg\,w(B^\mu H+\phi(c)\pa^\mu H)
\eea
enjoys the properties (P1)--(P4), i.e., the S-matrix is
string-independent at all orders. 
\end{prop}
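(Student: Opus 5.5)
The plan is to verify (P1)--(P4) one at a time and then invoke \pref{p:HGI-LQ}. Everything rests on a closed-form description of the derivation $\delta_c+\omega_Q$ on the linear fields $B,F,H,A(c),\phi(c)$ and on $\pa H$. Since $Q=w\,\pa L_1(c)/\pa A(c)$ with $w$ inert (\rref{rk:Qwla}), I would compute $\omega_Q$ factorwise from the two-point obstructions of \aref{a:obst}, using kinematic propagators ($c_B=c_H=0$). The only non-kinematic ones are those where the Proca or Maxwell equation \eref{eom-Proca} or the Klein-Gordon equation for $H$ enters, namely $O_\mu(F^{\mu\nu},B_\nu)$-type and $O_\mu(\pa^\mu H,H)$-type contact terms. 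These give delta-function contributions which, after the $y$-integration in \eref{omega}, should produce
\[
\omega_Q(B_\mu)=-gw\,A_\mu(c)H\cdot(\text{const}),\qquad \omega_Q(H)=mg\,w\,\phi(c),
\]
together with the induced action on $A(c)=B+\pa\phi(c)$ and $\phi(c)=I_c(B)$. Through the string integral $I_c$ this also gives $\omega_Q(\phi(c))=-\tfrac gm wH+\dots$. I would tabulate these and fix the numerical constants by requiring that the first-order condition \eref{LQ1} holds with the given $Q_1$.

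For (P2), the identity is read off directly from \eref{Ltot-AHM}. Set $\wh A=A(c)$ and $\wh\Phi=v+H+im\phi(c)$ with $v=m/g$, and expand $\tfrac12|(\pa_\mu-igA_\mu(c))\wh\Phi|^2$. By \eref{eom-AHM}, $\pa\phi(c)-A(c)=-B$, so the imaginary part becomes $-m(B+\tfrac gmA(c)H)$ and the real part becomes $\pa H+mgA(c)\phi(c)$. The potential is $V=\tfrac{m_H^2}{8v^2}(|\wh\Phi|^2-v^2)^2$. This is a one-line check once the squares are expanded. (P3) is standard $U(1)$ gauge invariance of $L[A,\Phi]$ in \eref{LAPhi-Intro}.

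For (P1), I would compute $(\delta_c+\omega_Q)(L_0)$ using the table. The term $\tfrac12m^2B^2$ produces $m^2 B\cdot\omega_Q(B)$, and $\tfrac12(\pa H)^2-\tfrac12m_H^2H^2$ produces $\pa H\cdot\pa\omega_Q(H)-m_H^2H\omega_Q(H)$. I would then show that the sum equals $-\pa_\mu Q^\mu$. This uses $\pa_\mu B^\mu=0$, the Klein-Gordon equation for $H$, and $\pa\phi(c)=A(c)-B$. The $m_H^2$-term must cancel against the non-kinematic part of the $H$ obstruction. I expect this bookkeeping of contact terms to be the main obstacle, because signs and factors of $m$ from the non-kinematic propagators must match exactly.

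For (P4), I would show with $\lambda=w$ that
\[
(\delta_c+\omega_Q)(A_\mu(c))=\pa_\mu w
\quad\text{and}\quad
(\delta_c+\omega_Q)(\wh\Phi)=igw\,\wh\Phi .
\]
The first follows because $\omega_Q(A(c))$ must vanish: the contributions from $\omega_Q(B)$ and $\pa\,\omega_Q(\phi(c))$ cancel, in the abelian analogue of \lref{l:sGT} with $\rho$ replaced appropriately. For the second, the real part requires $\omega_Q(H)=-gm\,w\phi(c)$ (sign to be fixed), while the imaginary part requires $\delta_c(im\phi(c))+im\,\omega_Q(\phi(c))=igw(v+H)$; here $\delta_c\phi(c)=w$ supplies the term $igwv=imw$, and $\omega_Q(\phi(c))=\tfrac gm wH$ supplies the rest. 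Hence the task reduces to confirming these two entries of the $\omega_Q$-table, after which \pref{p:HGI-LQ} gives string-independence at all orders.
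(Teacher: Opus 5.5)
Your plan follows the paper's proof step for step. You tabulate $\omega_Q$ on the linear fields, obtain (P1) from the equations of motion, expand $\frac12\vert(\pa-igA(c))\wh\Phi\vert^2$ using $\pa\phi(c)=A(c)-B$ for (P2), and check (P4) with $\lambda=w$. Your (P2), your (P3), and the entries you demand in (P4) are all correct: $\omega_Q(A(c))=0$, $\omega_Q(H)=-mg\,w\phi(c)$, $\omega_Q(\phi(c))=\frac gm wH$.

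The gap is that the one computation carrying the content of the proof, the $\omega_Q$-table, is never done, and what you say about it is wrong.

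\textbf{Which fields contribute.} Since $w$, $H$ and $\phi(c)$ are inert, only $B^\mu$ and $\pa^\mu H$ in $Q^\mu$ contribute, with $\pa Q^\mu/\pa B^\mu=mg\,wH$ and $\pa Q^\mu/\pa(\pa^\mu H)=mg\,w\phi(c)$. There is no $F$ in $Q$, so $O_\mu(F^{\mu\nu},\cdot)$ contact terms play no role.

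\textbf{The entry $\omega_Q(B)$ has the wrong form.} You anticipate $\omega_Q(B_\mu)\propto wA_\mu(c)H$. In fact $iO_\mu(B^\mu(y),B^\ka(x))=m^{-2}\pa_y^\ka\delta(x-y)$, and after integrating by parts in \eref{omega} this gives $\omega_Q(B)=-\frac gm\pa(wH)$; no $A(c)$ can arise. With your form, $\omega_Q(A(c))=\omega_Q(B)+\pa\,\omega_Q(\phi(c))$ would not vanish. (P1) would also fail, because $m^2B^\mu\omega_Q(B_\mu)$ is the only term that can cancel $mg\,B^\mu\pa_\mu(wH)$ in $\pa_\mu Q^\mu$.

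\textbf{Sign inconsistencies.} The signs of $\omega_Q(H)$ and $\omega_Q(\phi(c))$ in your first paragraph are opposite to the correct ones you require later in (P4).

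\textbf{The proposed remedy cannot work.} Fixing the constants by requiring \eref{LQ1} is impossible, because $\delta_cL_1=\pa_\mu Q_1^\mu$ contains no obstruction at all. Fixing them by demanding (P1) or (P4) instead would be circular.

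\textbf{The repair.} Apply \tbref{tb:2pt} and \eref{factorwise} directly. This gives
$\omega_Q(F)=\omega_Q(A(c))=0$, $\omega_Q(B)=-\frac gm\pa(wH)$, $\omega_Q(\phi(c))=\frac gm wH$, $\omega_Q(H)=-mg\,w\phi(c)$, and $\omega_Q(\pa H)=-mg\,\pa(w\phi(c))$.
Then
$\omega_Q(L_0)=-mg\big(B^\mu\pa_\mu(wH)+\pa^\mu H\,\pa_\mu(w\phi(c))-m_H^2\,w\phi(c)H\big)=-\pa_\mu Q^\mu$.
This uses only $\pa_\mu B^\mu=0$ and $\pa_\mu\pa^\mu H=-m_H^2H$; the relation $\pa\phi(c)=A(c)-B$ is not needed for (P1). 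Moreover $\delta_cL_0=0$, because $L_0$ involves only local fields. (P4) then follows exactly as in your last paragraph, and \pref{p:HGI-LQ} concludes.
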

\textit{Proof:} $L_0$ is determined by $Q^\mu$ ``in the autonomous
way'', i.e., via $\pa_\mu Q^\mu = -
(\delta_c+\omega_Q)(L_0)$ in \cref{c:ILQ}(i) and (P1), as follows: The
obstruction map $\omega_Q(\cdot)$ is computed from the two-point 
obstructions given in \tbref{tb:2pt} and \eref{factorwise} in
\aref{a:obst}. This gives \tbref{tb:omQ-AHM}. 
\begin{table}[h] 
    $$
  \begin{tabular}{@{}l||c|c|c|c|c|c|@{}}
& $F$ & $A(c)$ & $B$ & $\phi(c)$ & $H$ &  $\pa H$
\\ \hline\hline
$ \omega_Q(\cdot)$ \halfquad & \halfquad $0$ \halfquad & \halfquad $0$ \halfquad & \halfquad$-\frac gm \pa (wH)$
\halfquad & \halfquad $\frac gm \, wH$\halfquad & \halfquad$-mg\,w\phi$\halfquad
& \halfquad $-mg\, \pa(w\phi)$ \halfquad
\mystrut{11}{6} \\ \hline 
\end{tabular}
$$
\caption{The obstruction map $\omega_Q$ in the abelian Higgs model.}
\label{tb:omQ-AHM} 
\end{table}

Then, the equations of motion and \tbref{tb:omQ-AHM} immediately imply
$\pa_\mu Q^\mu = - \omega_Q(L_0)$ where $L_0$ given in
\eref{Ltot-AHM}. 
Because $\delta_c(L_0)=0$, (P1) is fulfilled.

Notice that $L_0$ is
the sum of the free Proca and scalar Lagrangians. Next, let
\bea{LAPhi-AHM} L[A,\Phi] = -\sfrac14\Erw{F_{\mu\nu}[A]\Vert F^{\mu\nu}[A]} +
(D_\mu\Phi)^*D^\mu\Phi - V(\Phi^*\Phi),
\eea
where $A$ and $\Phi$ are a vector field and a complex scalar field
with covariant derivative  $D_\mu\Phi = (\pa_\mu-ig A_\mu)\Phi$, and
$V$ is the double-well potential:
\bea{V}
V(\Phi^*\Phi) = \sfrac{m_H^2}{8v^2}(\Phi^*\Phi-v^2)^2.
\eea
Then, (P2) and (P3) are fulfilled with
\bea{AcPhic-AHM}
\wh A_\mu := A_\mu(c), \quad \wh \Phi := v+H + im\phi(c),
\eea
provided the parameter $v$ in \eref{LAPhi-AHM} is chosen to be
\bea{}
v=\sfrac mg.
\eea
Clearly $L[A,\Phi]$ is invariant under infinitesimal and finite gauge
transformations
\bea{GT-ab}
\delta_\lambda (A) = \pa \lambda, \quad \delta_\lambda(\Phi)
= ig\lambda \,\Phi,  \quad\hbox{resp.}\quad 
\alpha_\gamma (A) =  A+\pa \gamma, \quad
\alpha_\gamma(\Phi) = e^{ig\gamma}.
\eea
\newpage

In the verification of \eref{hGI-LQ},  the crucial detail is the field equation
$\pa\phi(c)=A(c)-B$, see \eref{paphi}:
\bea{}
\notag (\pa - ig \wh A)\wh \Phi &=& \pa H + im \pa\phi(c) -ig A(c)
(v+H+im\phi(c)) \\ \notag &=& \pa H + mg A(c)\phi(c)- i\big(m B + g A(c) H
\big).
\eea
This, together with $F[\wh A] = F$ and $\wh\Phi^*\wh\Phi =
(v+H)^2+m^2\phi(c)^2$, yields \eref{hGI-LQ}. 

There remains (P4). From \tbref{tb:omQ-AHM}, we read off the action of
$(\delta_c+\omega_Q)(\cdot)$. We compute
\bea{}
(\delta_c+\omega_Q)(\wh A_\mu) = \pa_\mu w, \quad
(\delta_c+\omega_Q)(\wh\Phi) = - mg\,w\phi(c) +im(w+\sfrac gm\,w H)
=igw \,\wh\Phi. \notag
\eea
In view of \eref{GT-ab}, (P4) is fulfilled with $\lambda=w$.

With (P1)--(P4) fulfilled, the S-matrix is
string-independent at all orders by \pref{p:HGI-LQ}. \qed

Recall \rref{rk:no-SSB} that the vacuum expectation
$\erw{0\vert\wh\Phi\vert 0}=v$ does not mean that there is 
spontaneous symmetry breaking. 

\begin{prop}\label{p:AHM-LV}
  The abelian Higgs model also enjoys the
  properties (P5)--(P8) with
\footnote{\label{reno}The couplings of $\frac{m^2}2B^2$ to $(1+\frac gm
    H)^2-1$ in \eref{K-AHM} and to $1-(1+\frac gm
    H)^{-2}$ in \cite{AHM} are related by a renormalization group
    transformation taking $c_B=0$ to $c_B=-1$. The argument in
    \cite[Sect.~4.7]{AHM} can be
    extended to all orders.}
\bea{K-AHM}
K\ww &=& mgB_\mu B^\mu (H + \sfrac g{2m}H^2) - \sfrac12m_H^2(\sfrac
g{m}H^3+\sfrac{g^2}{4m^2}H^4), 
\\ \notag \hbox{hence} \quad
L_0+K\ww &=& -\sfrac14 F_{\mu\nu}F^{\mu\nu}
+ \sfrac12 m^2 B^2(1+\sfrac gmH)^2 +\sfrac12 (\pa H)^2 - V(H,0). 
\eea
The mediator field $U^\mu$ is of the form
\bea{U-AHM}
U^\mu(c) = \beta(H,\phi(c)) \cdot B^\mu + \eta(H,\phi(c))\cdot \pa^\mu H,
\eea
where $\beta$ and $\eta$ have to be determined as power series,
and $A_0:=B$ and $\Phi_0:=v+H$.
\end{prop}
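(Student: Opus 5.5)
Our plan is to verify (P5)--(P8) directly; \pref{p:HGI-LV} then yields equality of S-matrices. (P6) and (P7) are quick. With $A_0=B$ and $\Phi_0=v+H$ (real), we have $D_\mu\Phi_0=\pa_\mu H-igB_\mu(v+H)$, so
$$|D\Phi_0|^2=(\pa H)^2+g^2B^2(v+H)^2=(\pa H)^2+m^2B^2(1+\sfrac gmH)^2 .$$
In addition, $\Phi_0^*\Phi_0-v^2=2vH+H^2$ reproduces $V(H,0)$. Comparing with \eref{K-AHM} gives (P6). (P7) is the standard invariance of $L[A,\Phi]$ under $A\mapsto A+\pa\gamma$, $\Phi\mapsto e^{ig\gamma}\Phi$, which uses no property of the fields.

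The substance lies in (P5) and (P8), which we would treat together. The target of (P8) suggests its form: $\wh\Phi=v+H+im\phi(c)$ should be a phase rotation of a real field. We therefore write $\wh\Phi=|\wh\Phi|e^{i\theta}$ with $\theta=\arctan\frac{m\phi(c)}{v+H}$, understood as a formal power series in $g$ since $v=m/g$; to lowest order $\theta\approx g\phi(c)$. Then (P8) requires
$$e^{\omega_U}(A(c))=B-\sfrac1g\pa\gamma,\qquad e^{\omega_U}(v+H+im\phi(c))=e^{-ig\gamma}(v+H)$$
for a suitable $\gamma$, with the sign conventions adjusted as needed. The first relation is consistent with \eref{paphi} when $\gamma=-g\phi(c)+\dots$.

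Our strategy is an ansatz \eref{U-AHM} with unknown power series $\beta,\eta$ in $H,\phi(c)$. We would compute $\omega_U$ on the basic fields from the two-point obstructions of \tbref{tb:2pt} with kinematic propagators, in analogy to \tbref{tb:omQ-AHM}. Because $B$ and $\pa H$ are the only fields with non-kinematic obstructions against $B$ and $\pa H$, $\omega_U$ should act like a \emph{vector field} on the variables $(H,\phi(c))$: $\omega_U(H)$ and $\omega_U(\phi(c))$ are functions of $H,\phi(c)$ determined by $\eta$ and $\beta$, while $\omega_U(B)$, $\omega_U(A(c))$ and $\omega_U(\pa H)$ are compatible derivatives. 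The exponential $e^{\omega_U}$ is then the time-one flow of this vector field, which makes it a Wick algebra homomorphism. We would choose the vector field so that its flow carries the point $(H,\phi)$ to a configuration in which the phase of $\wh\Phi$ is removed, for instance the radial map $(H,m\phi)\mapsto(|\wh\Phi|-v,0)$. This determines $\beta,\eta$ order by order. Next, (P5), i.e.\ $\pa_\mu U^\mu=-\omega_U(L_0)$, has to be checked using $\pa B=0$, $\Box H=-m_H^2H$ and $\pa\phi(c)=A(c)-B$. It constrains which combination of $\beta,\eta$ is allowed, exactly as the AHM table established (P1) for $Q$.

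The main obstacle is the consistency of these requirements. The flow fixed by (P8) must also satisfy (P5), and its action on $A(c)$ must come out as a pure gradient shift $A(c)\mapsto B+\pa(\cdot)$ whose gauge parameter matches the phase $\gamma$ in the scalar sector. We would attack this by first solving at order $g^0$ ($U_1=\phi(c)(\dots)$, from \eref{LV1} with $L_1-K_1$ a total derivative) and at order $g^1$ by hand. We would then seek a closed form, e.g.\ $\eta\propto$ a function of $\theta$ and $\beta$ fixed by $\wh\Phi^*\wh\Phi$-invariance, and check that the generator commutes appropriately with the equations of motion, so that the exponential can be summed.
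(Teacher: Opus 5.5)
Your treatment of (P6) and (P7) is fine. Your framework for (P5)/(P8) also matches the paper: the ansatz \eref{U-AHM}, $\omega_U$ acting on functions of $(H,\phi(c))$ as the vector field $m^{-2}\beta\,\pa_\phi-\eta\,\pa_H$, and $e^{\omega_U}$ being a homomorphism. The gap is in (P8), which is the real content of the proposition. You never establish that $\beta,\eta$ exist, and the target you propose for the flow is wrong.

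The radial map $(H,m\phi)\mapsto(|\wh\Phi|-v,0)$ is not invertible and not of the form $\id+O(g)$: its $\phi$-component is $0$ rather than $\phi+O(g)$. So it cannot be $e^{\omega_U}$ with $\omega_U=O(g)$. Even formally it would give $e^{\omega_U}(\wh\Phi)=|\wh\Phi|$, whose modulus is not $v+H$, so it is not a gauge transform of $\Phi_0=v+H$ at all.

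The observation you are missing is that \tbref{tb:2pt} gives $\omega_U(A(c))=0$. Neither $B^\mu$ nor $\pa^\mu H$ has an obstruction against $A(c)$; consistently, $\omega_U(B)+\pa\,\omega_U(\phi(c))=-m^{-2}\pa\beta+m^{-2}\pa\beta=0$. Hence $e^{\omega_U}(\wh A)=A(c)=B+\pa\phi(c)$ exactly. This fixes the gauge parameter to be $\phi(c)$ with no higher-order corrections, and thereby fixes the scalar target completely: $e^{\omega_U}(v+H+im\phi(c))=e^{ig\phi(c)}(v+H)$. In your flow language, the time-one map must be the near-identity polar-to-Cartesian map $(H,\phi)\mapsto\big((v+H)\cos g\phi-v,\ \frac1m(v+H)\sin g\phi\big)$. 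It keeps the angle $g\phi$ and the modulus $v+H$, rather than removing the phase. (Also, with $A\mapsto A-\frac1g\pa\gamma$ the matching phase is $e^{-i\gamma}$, not $e^{-ig\gamma}$.)

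Once the target is fixed, you need neither a closed form nor an argument that the generator ``commutes with the equations of motion''. Dividing by $v=m/g$, the condition becomes $e^{m^{-2}\beta\pa_\phi-\eta\pa_H}(1+\frac gmH+ig\phi)=e^{ig\phi}(1+\frac gmH)$. At each order the new unknowns enter only linearly, through $-\frac1m\eta_n+\frac{i}{m^2}\beta_n$. The real and imaginary parts therefore determine $\eta_n$ and $\beta_n$ recursively and uniquely; this is \lref{l:dress-AHM}.

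Finally, the consistency obstacle you anticipate between (P5) and (P8) does not exist. From the values of $\omega_U$ on $B$, $H$, $\pa H$ one gets $\pa_\mu U^\mu=\pa_\mu\beta\,B^\mu+\pa_\mu\eta\,\pa^\mu H-m_H^2\eta H=-\omega_U(L_0)$ for \emph{arbitrary} $\beta,\eta$. This uses only $\pa_\mu B^\mu=0$ and $\pa_\mu\pa^\mu H=-m_H^2H$. So (P5) imposes no constraint, and the whole weight of the proof rests on the recursive solution of (P8), which your proposal leaves open.
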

\textit{Proof:}
We do not need to know the functions $\beta(H,\phi)$ and $\eta(H,\phi)$
to establish (P5). Namely, \eref{U-AHM} together with \tbref{tb:2pt}
and the inertness of $\phi$ and $H$ yields \tbref{tb:omU-AHM}.
\begin{table}[htb] 
$$
\begin{array}{@{}l||c|c|c|c|c|@{}}
& A(c) & B & \phi(c) & H &  \pa H
\\ \hline\hline
 \omega_U(\cdot) \halfquad & \halfquad 0 \halfquad & \halfquad -m^{-2}\pa \beta(H,\phi(c))
\halfquad & \halfquad m^{-2}\beta(H,\phi(c)) \halfquad & \halfquad -\eta(H,\phi(c)) \halfquad
 & \halfquad -\pa\eta(H,\phi(c)) \halfquad
\mystrut{12}{7} \\ \hline 
\end{array}
$$
\caption{The obstruction map $\omega_U$ in the abelian Higgs model.}
\label{tb:omU-AHM} 
\end{table}

Then, the equations of motion and \tbref{tb:omU-AHM} immediately imply (P5):
$$\pa_\mu U^\mu = \pa_\mu \beta \cdot B^\mu +\pa_\mu\eta \cdot \pa^\mu H  -
\eta\cdot m_H^2H = -\omega_U(L_0) .$$

In the verification of (P6), the crucial part is again the covariant derivative: 
$$(\pa_\mu-igA_{0,\mu})\Phi_0= (\pa_\mu-igB_\mu)(v+H) = \pa_\mu H -
imB_\mu(1+\sfrac gmH).$$
(P7) is obviously satisfied. For the nontrivial property (P8), we need a Lemma.
\newpage

\begin{lemma}\label{l:dress-AHM}
  There exist unique real power series (in
  $g$) $\beta(H,\phi)
  = g H\phi + \dots$
  and $\eta(H,\phi) = \frac g2\phi^2+\dots$, such that $\omega_U$ with
  $U$ as in \eref{U-AHM} induces the transformation
\footnote{\label{}The real and imaginary parts of \eref{dress-AHM}
  give the dressings of $H$ and $\phi(c)$ separately. They were first
  conjectured in \cite{LV} by the perturbative evaluation of
  string-independence until $g^5$. \pref{p:AHM-LV} shows the
  conjecture to be correct at all orders.
  \\ Observe that \eref{dress-AHM} takes complex polar to Cartesian
  coordinates. It could be an appealing challenge to analytically
  understand the interpolating flow $e^{t\omega_U}$ between polar and
  Cartesian, and the field $U^\mu$ as its generator.
}
\bea{dress-AHM}
e^{\omega_U}(\wh\Phi) = e^{\omega_U}(v+H+im\phi(c)) =
e^{ig\phi(c)}(v+H) = e^{ig\phi(c)}\Phi_0.
\eea
\end{lemma}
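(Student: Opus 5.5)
The plan is to reduce the Lemma to a statement about formal flows of a vector field in the two variables $(H,\phi)$. By \tbref{tb:omU-AHM}, $\omega_U(\phi(c))=m^{-2}\beta(H,\phi(c))$ and $\omega_U(H)=-\eta(H,\phi(c))$. Both $\beta$ and $\eta$ are functions of $H$ and $\phi(c)$ only. Since $\omega_U$ is a derivation of the Wick algebra (tree-level obstructions are computed factorwise, \eref{factorwise}), it maps the subalgebra of (formal power series in $g$ with polynomial coefficients in) $H,\phi(c)$ into itself. On this subalgebra it acts as the first-order differential operator
$$X=m^{-2}\beta(H,\phi)\,\partial_\phi-\eta(H,\phi)\,\partial_H .$$
Its own coefficients $\beta,\eta$ are again acted upon by $\omega_U$ through the same rule, because $H$ and $\phi(c)$ are inert and no further obstructions arise. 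Hence all iterates $\omega_U^k$ on functions of $(H,\phi)$ coincide with $X^k$, and by the Lie series (Gröbner) formula $e^{\omega_U}(f)=f\circ\Psi_X$. Here $\Psi_X$ is the time-one flow of $X$, understood as a formal power series in $g$. This step is meaningful because $X=O(g)$, so $e^{X}$ converges degree by degree in $g$.

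Next I write the target as the map $\Psi$ to be realized. We need $e^{\omega_U}(v+H+im\phi)=e^{ig\phi}(v+H)$. Taking real and imaginary parts, with $v=m/g$, this requires
$$\Psi(H,\phi)=\Big((v+H)\cos(g\phi)-v,\ \sfrac1m(v+H)\sin(g\phi)\Big).$$
Expanding in $g$ shows that $\Psi=\id+O(g)$ and that each order is a real polynomial in $H,\phi$. For instance, $(v+H)\cos g\phi-v=H-\frac{mg}2\phi^2+O(g^2)$ and $\frac1m(v+H)\sin g\phi=\phi+\frac gm H\phi+O(g^2)$. Matching first orders fixes $\beta_1\propto gH\phi$ and $\eta_1\propto \frac g2\phi^2$, in agreement with the leading terms announced in the Lemma (up to the normalisation of $U$ in \eref{U-AHM}).

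The existence-and-uniqueness core is then the formal logarithm of a near-identity formal diffeomorphism. I would prove it by induction on the order in $g$. Suppose $X_1,\dots,X_{n-1}$ are fixed so that $e^{X_{<n}}$ reproduces $\Psi$ up to order $g^{n-1}$. At order $g^n$, the flow of $X=\sum_k X_k$ equals $X_n(\text{coordinates})$ plus a polynomial expression in the lower $X_k$, since all commutator and iterate terms involve only lower orders. Hence $X_n$ is uniquely determined as $\Psi_n$ minus that expression. Reality of $\beta,\eta$ follows because $\Psi$ and all lower data are real. Because $X$ determines $\beta=m^2X(\phi)$ and $\eta=-X(H)$, this gives unique real power series $\beta,\eta$, as claimed.

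The main obstacle I expect is the first step: justifying rigorously that $e^{\omega_U}$ acts on functions of $(H,\phi(c))$ exactly as the flow of $X$. This requires checking that $\omega_U$ produces no contributions beyond \tbref{tb:omU-AHM} when applied to the coefficients $\beta,\eta$ themselves, for example via obstructions of $U$ with $H$ or $\phi$ inside $\beta,\eta$. I would handle this by invoking inertness of $H$ and $\phi(c)$ (only $B$ and $\pa H$ have nontrivial two-point obstructions with $U$) together with the derivation property. Once this is settled, the rest is the purely algebraic triangular recursion above.
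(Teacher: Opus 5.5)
Your proposal is correct and is essentially the paper's own argument: via \tbref{tb:omU-AHM} the paper rewrites \eref{dress-AHM} as $e^{m^{-2}\beta\pa_\phi-\eta\pa_H}(1+\frac gmH+ig\phi)=e^{ig\phi}(1+\frac gmH)$ and solves it recursively, noting that at order $g^{n+1}$ the new coefficients enter only through $-\frac1m\eta_n+\frac{i}{m^2}\beta_n$ (your real and imaginary parts) plus terms in lower-order $\beta_\nu,\eta_\nu$, which is exactly your triangular formal-logarithm recursion. One correction to your hedge about the leading terms: your first-order matching gives $\beta=mgH\phi+\dots$ and $\eta=\frac{mg}{2}\phi^2+\dots$, as dimensional counting in \eref{U-AHM} also requires, so the leading terms printed in the Lemma lack a factor $m$; this is not a normalisation freedom, since $U$ in \eref{U-AHM} has none and the recursion fixes $\beta,\eta$ uniquely.
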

\textit{Proof:}
In view of \tbref{tb:omU-AHM} and $v=\frac mg$, \eref{dress-AHM}
becomes the differential equation
\bea{U-diff}
e^{m^{-2}\beta(H,\phi) \pa_\phi -\eta(H,\phi)\pa_H}(1+\sfrac gmH + ig\phi)
= e^{ig\phi}(1+\sfrac gmH).
\eea
At each order $g^{n+1}$, the left-hand side  is 
$(\frac 1{m^2}\beta_n\pa_\phi -\eta_n\pa_H)(\frac1m H+i\phi) \equiv
- \frac1m\eta_n+\frac{i}{m^2}\beta_n$ plus terms involving $\beta_\nu$ and
$\eta_\nu$ ($\nu<n$). This allows to recursively solve \eref{U-diff} for
$\eta_{n}$ and $\beta_n$. \qed 

Now, on top of \eref{dress-AHM}, $\omega_U(A(c))=0$ implies
$e^{\omega_U}(\wh A) = A(c)=B+\pa \phi(c)=A_0+\pa \phi(c)$. Clearly, the right-hand sides are
gauge transforms of  $\Phi_0$ and $A_0$ with $\lambda = \phi(c)$, that is, (P8).

With (P5)--(P8) fulfilled, the S-matrices of $L\ww(c)$ and of $K\ww$
automatically coincide (at tree-level) at all orders by
\pref{p:HGI-LV}. This concludes the proof of \pref{p:AHM-LV}. \qed

Notice that $K\ww$ is a non-renormalizable interaction.
Remember our attitude, expressed in \sref{s:obst},
that the preservation of the properties (P1)--(P8) should be imposed as a
renormalization condition at loop level, that would fix infinitely
many renormalization parameters at all orders. 
\begin{remk}\label{rk:DFM}
  In (P2) and in (P6), the relation $L_0+L\ww(c)=L[\wh A,\wh \Phi]$ and
  the relation between $L[e^{\omega_U}(\wh A),e^{\omega_U}(\wh\Phi)]$
  and $L[A_0,\Phi_0] = L_0+K\ww$, which involves a gauge
  transformation  with operator-valued and string-localized
  parameters, are identities among \emph{quantum} fields in the Fock
  space. This situation should be contrasted with the discussion in \cite{Fra}.
  
  In \cite{Fra} it is emphasized that $L[A,\Phi]$, regarded as a
  \emph{classical Lagrangian} (i.e., $A$ and $\Phi$ do not satisfy any
  field equations) is just another way of writing the Lagrangian $(L_0+K\ww)[B,H]$,
  ``devoid of any gauge symmetry''. The passage is made by a
  \emph{local} field redefinition: the polar decomposition
  $\Phi(x)=e^{i\chi(x)}(v+H(x))$ defines $H(x)$, and 
  $B(x):=A(x)+\pa\chi(x)$. This feature is interpreted by saying that
  the gauge invariance of $L[A,\Phi]$ is ``artificial'' -- which raises
  the question \cite{Hi,Ki,Ly,Ear1} whether the spontaneous breakdown of
  an artificial symmetry can have a physically meaning?  

In contrast, \cite{Fra} regards the gauge invariance of electrodynamics
and Yang-Mills as ``substantial'' because it cannot be erased by a local
field redefinition. Yet, these theories can be reformulated in terms
of classical gauge-invariant ``dressed fields'' constructed with the
help of a non-local ``dressing field''. The quantum version of these
dressed fields are our $e^{\omega_U}(A(c))$ and (when fermions are coupled)
$e^{\omega_{U+\wt U}}(\psi) =e^{ig\gamma(c)}\psi$, cf.\ \cite{sQCD}.
See also \rref{rk:dress} and \rref{rk:LV-YM}.  
\end{remk}

\paragraph{4. The electroweak interactions.}
The most general autonomous interaction density $L\ww(c)$ for the
self-interaction of any (finite) number of massless and massive vector
bosons and a single scalar Higgs particle was determined in
\cite{weak}, by imposing string-independence at first and second
order. At third order, some parameters in $L_1(c)$ and  $L_2(c)$ were
fixed, and $L_3(c)=0$. (Generalizations with several Higgs are possible, but at least one
Higgs \textit{must} must be present.)
  \newpage

Because fields $B(x)$ and $\phi(c,x)$ do not exist for the photon, we
introduce indices $\ua$ running only over the MVBs. The fields are
therefore the local field tensors $F_{a,\mu\nu}$, the vector potentials
$A_{a,\mu}(c)$ and their variation fields $w_a$ for all vector bosons
in a mass eigenbasis, and the local Proca fields $B_{\ua,\mu}$ and
the fields $\phi_\ua(c)$ for the massive vector bosons (MVBs). 

As in Yang-Mills, the self-couplings of vector bosons are described by
completely antisymmetric structure constants $f_{abc}$ of a Lie
algebra $\gg$ of compact type. We use the same notations as in Item~1,
also when the fields may be massive. But the masses lead to further
constraints: in particular, the adjoint action $\ad_{\tau_b}$ of the
``massless generators'' must ``preserve the mass'', i.e., $m_b=0$ and
$f_{abc}\neq 0$ implies $m_a=m_c$.
As a consequence, the massless generators generate a Lie subalgebra
$\hh\subset\gg$, and $\ad_\hh$ preserves the subspace $\mm\subset\gg$ spanned by the
``massive generators'' $\tau_\ua$. 

String independence at second order imposes more constraints, see
\cite[Eq.~(5.7)]{weak}. 

For the same reason as explained before \pref{p:AHM-LQ} in the abelian
Higgs model, one must include terms $L_2^*(c)$ into $L_2(c)$ when working
with the new condition \eref{resnew}. These were displayed in
\cite[Eq.~(B.4)]{weak}. But the resolvability of the third-order
obstruction has to be re-done. While this would be a tedious task to
do ``by hand'', hidden gauge invariance comes to assist. Namely, we
shall show that for the autonomous string-localized interaction
$L\ww(c) = gL_1(c)+ \frac 12{g^2}L_2(c)$, there exists
$Q^\mu$, such that the properties (P1)--(P4) are fulfilled, securing
the resolution of obstructions at all orders. In particular, this
confirms $L_3(c)=0$. 

The electroweak interaction is the unique solution to all constraints
with one photon and three MVBs (up to cases where one particle
decouples) \cite{weak}. We label the vector bosons in a mass
eigenbasis by indices $A,1,2,Z$, where $1$ and $2$ are coupled by the
photon: $f_{A12}\neq0$, hence $m_1=m_2=:m_W$. We may normalize the
structure constants such that
\footnote{\label{}\eref{fabc} defines $\theta$. The present convention
  differs from the normalization in \cite{weak} by a factor of $2$.}
\bea{fabc}
f_{A12}=\sin\theta, \quad f_{12Z}=\cos\theta
\eea
with an angle $\theta$ to be determined. $f_{AZi}$ ($i=1,2$) must
vanish because $m_W\neq m_Z$, see above.

With this input, the first and second orders of the interaction $L\ww(c)$
and $Q^\mu$ are uniquely determined by imposing \eref{LQ1} and
\eref{ILQ2} with kinematic propagators. As it turns out, the inclusion
of $L_2^*(c)$ and the addition of $L_0$ determined ``in the autonomous
way'' via (P1) simplify the structure of $L_0+L\ww(c)$: it becomes a
``sum of squares'', of which $L_0$, $gL_1(c)$, $\frac12g^2L_2(c)$ are
the   quadratic, cubic, and quartic terms, respectively: 
 \bea{Ltot-EW}
 L_0+L\ww(c)\! &\!\!\!=\!\!\!&\!-\sfrac14 \Erw{G_{\mu\nu}(c)\Vert G^{\mu\nu}(c)} 
+\sfrac12\sumno_\ua\!\!m_\ua^2 \big(B _\ua + C_{\ua}[A(c)] \big)^2
\notag \\  && \! +\sfrac12\big(\pa H + C_H[A(c)]\big)^2  - \sfrac 12m_H^2\big(H + \sfrac {gK}2(H^2 +
\sumno_\ua\!\! m_\ua^2\phi_\ua(c)^2)\big)^2 , \qquad \quad
\eea
where 
\bea{Ca}
C_{\ua,\mu}[A(c)] =  \sfrac g{m_\ua^2}\sfrac{\pa L_1(c)}{\pa B^\mu_\ua}
&=& g\sumno_{b\uc}\sfrac {m_\uc}{m_\ua}\ga_{\ua
  b\uc}A_{b,\mu}(c)\phi_\uc(c) +  gK A_{\ua,\mu}(c) H, \notag \\
C_{H,\mu}[A(c)] = g\sfrac{\pa L_1(c)}{\pa (\pa^\mu H)}
&=& gK\sumno_\ua m_\ua^2A_{\ua,\mu}(c)\phi_\ua(c).
\eea 
The coefficients $\ga_{\ua b\uc}$ are only defined for massive $\ua$ and $\uc$:
\bea{ga}
\ga_{\ua b\uc} := \sfrac{m_\ua^2-m_b^2+m_\uc^2}{2m_\ua m_\uc}
\cdot f_{\ua b\uc} =-\ga_{\uc b\ua},\qquad\hbox{in particular}\quad
\ga_{\ua b\uc}=f_{\ua b\uc} \quad\hbox{for}\quad b=A.\quad
\eea
Moreover, string independence requires that $m_Z\geq m_W$ and 
\bea{spec}
\sfrac{m_W}{m_Z}=\cos\theta \quad \hbox{and} \quad K=\sfrac1{2m_W}.
\eea
The masses of the particles fix all parameters in \eref{Ltot-EW},
\eref{Ca} (except the coupling constant). 

$L_0$ in \eref{Ltot-EW} is seen to be the sum of the free Maxwell
Lagrangian for the photon,  free Proca Lagrangians for the massive
vector bosons, and the free Lagrangian of the scalar Higgs particle. 
The specific form of the quadratic ``shifts'' \eref{Ca} in
\eref{Ltot-EW} was ``autonomously'' fixed by the initial first-order
condition \eref{LQ1}, and already anticipates the hidden gauge
invariance, see \eref{igAPhic-EW}.

To exhibit the hidden gauge invariance, it is convenient to define
\bea{ggv}
g_1:= \sfrac12\tan\theta\cdot g, \qquad g_2:= g, \qquad v:=
\sfrac{2m_W}g.
\eea
\begin{prop}\label{p:EW-LQ}
Let $\lambda$ the same as in \eref{lambdarho}, and (in accord with
\rref{rk:Qwla})
\bea{Q-EW}
Q^\mu = \sumno_a \lambda_a \sfrac{\pa L_1(c)}{\pa A_{a,\mu}(c)}.
\eea
With $A_\mu=(A_{i,\mu})_{i=0,1,2,3}$, let $\Phi$ an $\su(2)$-doublet of
$\u(1)$-charge $-1$ with covariant derivative
\bea{DPhi-EW}
D^{[A]}_\mu\Phi = \Big(\pa_\mu + ig_1 A_{0,\mu}\cdot \eins_2 + ig_2
\sumno_{i=1,2,3}A_{i,\mu}\cdot \sfrac12\sigma_i)\Big)\Phi .
\eea
Let $V(\Phi^*\Phi)= \frac{m_H^2}{8v^2}(\Phi^*\Phi-v^2)^2$, and
\bea{L-EW}
L[A,\Phi] := -\sfrac14\Erw {G_{\mu\nu}[A]\Vert G^{\mu\nu}[A]}
+ \sfrac12 (D^{[A]}_\mu \Phi)^*D^{[A],\mu}\Phi - V(\Phi^*\Phi).
\eea
Then the electroweak interaction $L\ww(c)$ together with \eref{Q-EW}
enjoys (P1)--(P4), where
\bea{AcPhic-EW}
(\wh A_i)_{i=0,1,2,3} &\!\!\!:=\!\!\!& \big(\cos\theta\!\cdot\!
A_A(c)-\sin\theta\!\cdot\! A_Z(c),\, A_1(c),\, A_2(c), \, 
\sin\theta\!\cdot\! A_A(c)+\cos\theta\!\cdot\! A_Z(c),\notag \\
\wh\Phi &:=& \bpm -m_W(i\phi_1(c) +\phi_2(c)) \\ v+H - im_Z \phi_Z(c)
\epm.
\eea
Consequently, the S-matrix is string-independent at all orders.
\end{prop}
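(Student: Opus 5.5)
The plan follows the abelian Higgs model (\pref{p:AHM-LQ}) step by step, with \lref{l:sGT} supplying the non-abelian gauge parameter. The final claim then follows from \pref{p:HGI-LQ}.

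\emph{Step 1: compute $\omega_Q$.} Starting from \eref{Q-EW}, I compute $\omega_Q$ on all linear fields using the two-point obstructions of \aref{a:obst}, the factorwise rule, and the inertness of $\lambda$. The result should be a table analogous to \tbref{tb:omQ-AHM}. On $A_a(c)$ and $F_a$ it gives \eref{omega-YM} with $\rho=g[\lambda,A(c)]$, since $\pa L_1/\pa A_a$ contains the Yang--Mills piece $\erw{F^{\mu\nu}\vert[A_\mu,\cdot\,]}$. The terms of $L_1$ containing $B_\ua$, $\phi_\ua(c)$ and $\pa H$ add extra contributions:
\begin{itemize}
\item $\omega_Q(B_\ua)$ and $\omega_Q(\pa H)$ become total derivatives of the expressions $\omega_Q(\phi_\ua(c))$ and $\omega_Q(H)$;
\item these are bilinear in $\lambda$ and $(\phi(c),H)$, with coefficients $\ga_{\ua b\uc}m_\uc/m_\ua$ and $K$ read off from \eref{Ca}.
\end{itemize}

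\emph{Step 2: property (P1).} Using the equations of motion \eref{eom-Proca}, \eref{eom-AHM} and this table, I verify $\pa_\mu Q^\mu=-(\delta_c+\omega_Q)(L_0)$. Here $L_0$ is the free Maxwell, Proca and Higgs Lagrangian; this is just \eref{dQ-YM} plus the massive terms.

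\emph{Step 3: properties (P2) and (P3).} (P3) is manifest for \eref{L-EW} under $\mathfrak{u}(1)\oplus\su(2)$ gauge transformations,
\[
\delta_\lambda A=\pa\lambda+g[\lambda,A],\qquad \delta_\lambda\Phi=-\big(ig_1\lambda_0+ig_2\sumno_i\lambda_i\tfrac12\sigma_i\big)\Phi ,
\]
in the rotated basis. For (P2) I insert \eref{AcPhic-EW} into \eref{L-EW} and compare with \eref{Ltot-EW}.
\begin{itemize}
\item The Weinberg rotation together with $g_1=\frac12\tan\theta\, g$ turns $g_1\wh A_0\oplus g_2\wh A_3$ into the mass-eigenbasis couplings. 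The structure constants \eref{fabc} then give $G[\wh A]=G(c)$.
\item The covariant derivative is the key computation, as in the AHM. Using $\pa\phi_\ua(c)=A_\ua(c)-B_\ua$, I check component by component that
\[
D^{[\wh A]}\wh\Phi = \bpm -im_W(B_1+C_1)-m_W(B_2+C_2)\\ \pa H+C_H-im_Z(B_Z+C_Z)\epm ,
\]
up to conventions. This requires $m_W/m_Z=\cos\theta$ and $v=2m_W/g$, i.e.\ $K=1/(2m_W)$ from \eref{spec}. The coefficients $\ga_{\ua b\uc}$ arise as sums of one contribution from the derivative of $\phi$ and one from $A\phi$.
\item The potential matches because $\wh\Phi^*\wh\Phi=(v+H)^2+\sum_\ua m_\ua^2\phi_\ua^2$.
\end{itemize}

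\emph{Step 4: property (P4).} \lref{l:sGT} already gives $(\delta_c+\omega_Q)(\wh A)=\delta_\lambda(\wh A)$, after rotating $\lambda$ into the $0,\dots,3$ basis. It remains to check $(\delta_c+\omega_Q)(\wh\Phi)=\delta_\lambda(\wh\Phi)$, and I do this directly from the Step 1 table:
\begin{itemize}
\item the $\delta_c$ contribution is $w_\ua$ on $\phi_\ua(c)$, which combines with the $g$-order part of $\omega_Q(\phi_\ua)$;
\item writing $\lambda=w+g(\dots)$, the implicit equation \eref{wGT} must absorb the higher-order terms.
\end{itemize}

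\emph{Main obstacle.} I expect Step 4 (with Step 1 feeding it) to be the hardest. $\omega_Q(\phi_\ua(c))$ involves the full series $\lambda$, not just $w$, so the identity has to close exactly using \eref{wGT} and the mass relations. I would first try to organize $\omega_Q$ on $(H,\phi)$ as a linear map in $\lambda$ and show that it equals $\delta_\lambda\wh\Phi$ minus $\delta_c\wh\Phi$, checking the charged and neutral doublet components separately.
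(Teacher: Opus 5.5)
Your route is the paper's: build the electroweak analogue of \tbref{tb:omQ-AHM} from \tbref{tb:2pt}, check (P1) with the equations of motion, get (P2) from the Weinberg rotation, the covariant derivative and $\wh\Phi^*\wh\Phi$, and get (P4) from \lref{l:sGT} for $\wh A$ plus a componentwise check on $\wh\Phi$. The substantive problem is in Step~1, where the abelian pattern is carried over too literally.

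The Yang--Mills part $\erw{F^{\mu\nu}\vert\rho_\nu}$ of $Q$ does not act only on $A(c)$ and $F$. Through the $F$--$B$ and $F$--$\phi$ entries of \tbref{tb:2pt} it also acts on the massive fields. As in \tbref{tb:omQ-EW}, this gives $\omega_Q(B_\ua)=\rho_\ua-\pa C_\ua[\lambda]$ and $\omega_Q(\phi_\ua(c))=I_c(\rho_\ua)+C_\ua[\lambda]$, with $C_\ua[\lambda]=m_\ua^{-2}\pa Q/\pa B_\ua$. So $\omega_Q(B_\ua)\neq-\pa\,\omega_Q(\phi_\ua)$. Instead $\omega_Q(B_\ua)=\omega_Q(A_\ua)-\pa\,\omega_Q(\phi_\ua)$, as it must be since $B_\ua=A_\ua(c)-\pa\phi_\ua(c)$. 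Likewise, $\omega_Q(\phi_\ua)$ is not merely bilinear in $\lambda$ and $(\phi,H)$.

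Both missing terms are indispensable.
\begin{itemize}
\item In (P1), $\pa_\mu F_\ua^{\mu\nu}=-m_\ua^2B_\ua^\nu$ (instead of $0$ as in \eref{dQ-YM}) puts $-m_\ua^2B_\ua\cdot\rho_\ua$ into $\pa_\mu Q^\mu$. Only the $\rho_\ua$ inside $\omega_Q(\frac12 m_\ua^2B_\ua^2)$ cancels it.
\item In (P4), your table would give $(\delta_c+\omega_Q)(\phi_\ua)=w_\ua+C_\ua[\lambda]$, and \eref{wGT} would have nothing to act on.
\end{itemize}

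With the correct entries, your ``main obstacle'' is a one-line check. By \eref{wGT}, $w_\ua+I_c(\rho_\ua)=\lambda_\ua$, so $(\delta_c+\omega_Q)(\phi_\ua)=\lambda_\ua+C_\ua[\lambda]$ and $(\delta_c+\omega_Q)(H)=-C_H[\lambda]$. Hence $(\delta_c+\omega_Q)(\wh\Phi)$ is minus \eref{igAPhic-EW} with $A_a(c)$ replaced by $\lambda_a$. By linearity this is $\delta_\lambda(\wh\Phi)$, and no mass relations enter beyond those already used for (P2).

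Two smaller points on Step~3.
\begin{itemize}
\item $\pa\phi$ contributes nothing to $\ga_{\ua b\uc}$. Being linear, it only cancels the $A$-terms produced by $v$ and leaves $B$. All of $C_\ua$ comes from the bilinear terms $\wh A\,\wh\Phi$. For example, for $b=Z$ and $\ua,\uc\in\{1,2\}$ it is the sum of the $g_1\wh A_0$ and $\frac{g_2}2\wh A_3\sigma_3$ contributions after the Weinberg rotation.
\item The component check forces the lower entry of $\wh\Phi$ in \eref{AcPhic-EW} to be $v+H+im_Z\phi_Z(c)$. Only then does $im_Z\pa\phi_Z(c)=im_Z(A_Z(c)-B_Z)$ cancel the $-im_ZA_Z(c)$ coming from $v$. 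Your displayed $D^{[\wh A]}\wh\Phi$ already corresponds to this sign, and it is also the sign for which the (P4) check closes.
\end{itemize}
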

\textit{Proof:}
We insert into $L[A,\Phi]$ the string-localized quantum fields $\wh A$
and $\wh \Phi$. Because the passage from $(\wh A_i)_i$ to $(A_a(c))_a$
just amounts to an orthogonal change of basis of the Lie algebra
$\u(1)\oplus\su(2)$, the term $-\frac14\Erw {G[A(c)]\Vert G[A(c)]}$
equals the first term in \eref{Ltot-EW}. Because $\wh\Phi^*\wh\Phi-v^2=
2v(H + \frac1{2v}(H^2 + \sum_\ua m_\ua^2 \phi_\ua(c)^2)$
and $\frac1{2v}=\frac g2K$, the term $-V(\wh\Phi^*\wh\Phi)$ equals the
last term in \eref{Ltot-EW}.

Working out the vector-valued doublet
$i\big(g_1\wh A_0\eins+g_2\sum_i\wh A_i\frac 12\sigma_i\big)\wh\Phi$
in \eref{DPhi-EW} with the above specifications, we find
\footnote{\label{}The first term arises from $\bpm 0\\[-2mm]v\epm$ in
  $\wh\Phi$. The ``strange'' mass factors in $\ga_{\ua b \uc}$
  \eref{ga} appearing in $C_\ua$ in the second term arise from the
  Weinberg rotation. E.g.,  the matrix $g_1 A_0(c)\eins +
  \frac{g_2}2A_3(c)\sigma_3$ has the $1$-$1$-component $\frac g2[2\sin\theta
  A_A+\frac{2\cos^2\theta-1}{\cos\theta}A_Z]  =g[\sin\theta
  A_A+\frac{2m_W^2-m_Z^2}{2m_W^2}\cos\theta A_Z]=-g [\ga^A_{12}
  A_A+\ga^Z_{12} A_Z]$.}
\bea{igAPhic-EW}
i\sfrac g2\big(\tan\theta \wh A_0\eins+\sumno_i \wh A_i\sigma_i\big)\wh\Phi
&=& \bpm m_W(iA_1(c)+A_2(c)) \\
-im_ZA_Z(c) \epm  \notag \\ &+& \bpm m_W(i C_1[A(c)]+C_2[A(c)]) \\
C_H[A(c)] - i m_Z C_Z[A(c)]\epm. \quad
\eea 
Adding $\pa\wh\Phi$ to the first term and using \eref{paphi}, one sees
that the real and imaginary parts of the components of $D^{[\wh
  A]}\wh\Phi$ are $m_\ua (B_\ua+C_\ua[A(c)])$ and $\pa H+C_H[A(c)]$.
\newpage

Thus, the term $\sfrac12 \big\vert D^{[\wh A]}\wh\Phi\big\vert^2$
equals the second and third terms in \eref{Ltot-EW}. This proves (P2) and (P3). 

In order to verify (P1), we need the equations of motion and
\tbref{tb:omQ-EW}, which follows from \eref{Q-EW} and \tbref{tb:2pt}.
Thus, we have to work out
\bea{}
\notag \pa Q + \omega_Q(L_0) = \sfrac{\pa Q}{\pa
  \lambda_a}\pa\lambda_a + \sfrac{\pa Q}{\pa F_\ua}(-m_\ua^2
B_\ua) + \sfrac{\pa Q}{\pa \phi_\ua(c)}(A_\ua(c)-B_\ua)
+ \sfrac{\pa Q}{\pa H}\pa H +\sfrac{\pa Q}{\pa (\pa H)}(-m_H^2 H)  \\
\notag
-\sfrac12\Erw{F\Vert \pa\wedge\rho} + m_\ua^2B_\ua
(\rho_\ua-\sfrac1{m_\ua^2}\pa \big(\sfrac{\pa Q}{\pa B_\ua}\big))
+ \pa H \big(-\pa \big(\sfrac{\pa Q}{\pa(\pa H)}\big)\big) - m_H^2 H
\big(-\sfrac{\pa Q}{\pa(\pa H)}\big).
\eea
There are no contributions to $\pa Q$ arising from $\sfrac{\pa Q}{\pa A}\sim
\eta_{\mu\nu} F^{\mu\nu}=0$ and from $\sfrac{\pa Q}{\pa B}$ because $\pa
B=0$. A straightforward computation yields $\pa Q + \omega_Q(L_0)=0$.
\begin{table}[htb] 
$$
\begin{array}{@{}l||c|c|c|c|c|c|@{}}
& F_a & A_a(c) & B_a & \phi_a(c) & H &  \pa H
\\ \hline\hline
  \omega_Q(\cdot) \, & \, \pa\wedge\rho_a  \, &  \, \rho_a + \pa I(\rho_a) \, & 
\, \rho_a - \frac1{m_a^2}\pa \big(\frac{\pa Q}{\pa B_a}\big)\, & 
\, I_c(\rho_a) + \frac1{m_a^2} \frac{\pa Q}{\pa B_a}\, & \, -
\frac{\pa Q}{\pa(\pa H)}\, &
\, -\pa \big(\frac{\pa Q}{\pa(\pa H)}\big)\,
\mystrut{14}{8} \\ \hline 
\end{array}
$$
\caption{The obstruction map $\omega_Q$ for the electroweak
  interactions. Here, $\rho_\nu = g[\lambda,A_\nu(c)]$ and
  $\frac1{m_\ua^2}\frac{\pa Q}{\pa B_\ua}=
  g\sum_{b\uc}\frac {m_\uc}{m_\ua}\ga_{\ua b\uc}\lambda_b\phi_\uc +
  gK\lambda_\ua H \equiv C_\ua[\lambda]$ and $\frac{\pa Q}{\pa (\pa H)} =
  gK\sum_am_\ua^2\lambda_\ua\phi_\ua\equiv C_H[\lambda]$. 
}
\label{tb:omQ-EW} 
\end{table}

In order to verify (P4), we have to add $\delta_c(\cdot)$ to the
entries of \tbref{tb:omQ-EW} and work out
\bea{}
\notag (\delta_c + \omega_Q)(\wh A_\mu) &\stackrel!=&
\delta_\lambda(\wh A_\mu) = \pa_\mu\lambda + g[\lambda,\wh A_\mu],
\\ \notag
(\delta_c + \omega_Q)(\wh\Phi) &\stackrel!=&
\delta_\lambda(\wh\Phi) = ig \pi(\lambda)\wh\Phi =
-i\big(g_1\lambda_0\eins+\sfrac{g_2}2\sumno_i\lambda_i\sigma_i\big)\wh\Phi,
\eea 
where $\pi$ is the vector representation of $\su(2)$ and the
representations of charge $-1$ of $\u(1)$, and
$\lambda_0=\cos\theta\cdot \lambda_A-\sin\theta\cdot \lambda_Z$,
$\lambda_3=\sin\theta\cdot \lambda_A+\cos\theta\cdot \lambda_Z$.

The first condition is the same as \eref{pawGT}, solved by
\eref{lambdarho}. The right-hand side of the second
condition is the same as \eref{igAPhic-EW} with $A_a(c)$ replaced by
$\lambda_a$. Thus, it remains to check
$$(\delta_c + \omega_Q)\bpm -m_W(i\phi_1(c) +\phi_2(c)) \\ v+H - im_Z\phi_Z(c)\epm
\stackrel != -\bpm m_W(i\lambda_1+\lambda_2) \\
-im_Z\lambda_Z \epm  - \bpm m_W(i C_1[\lambda]+C_2[\lambda]) \\
C_H[\lambda] - i m_Z C_Z[\lambda]\epm,$$
which is obviously satisfied by \tbref{tb:omQ-EW} and \eref{wGT}.
This proves (P4).
With (P1)--(P4) established, the S-matrix is string-independent at all
orders by \pref{p:HGI-LQ}. \qed 

One may as well verify (P5)--(P8) with $A_0=A$ the massless gauge
potential and $\Phi_0=\bpm 0\\[-2mm] v+H\epm$. The mediator field $U$
has a Yang-Mills part, such that $e^{\omega_U}(A(c)) = \alpha_{\gamma(c)}(A) =
e^{i\gamma(c)}(A-ig\inv\pa) e^{-i\gamma(c)}$ with
$i\gamma(c):=\sum_a\gamma_a(c)\tau_a$  as in \cite{sQCD}, and an
MVB part to be determined similarly as in \lref{l:dress-AHM}, such that  
$e^{\omega_U}(\Phi(c))= \pi(W(c))\Phi(c)$. Then \pref{p:HGI-LV} applies,
and consequently, $S_{L\ww(c)}=S_{K\ww}$.

\paragraph{5. Quark and lepton couplings: The chirality theorem.}
Given $L\ww$ as in \eref{Ltot-EW}, we want to couple fermionic vector
and axial currents of the form $j^\mu=\ol\psi \gamma^\mu T\psi$ and
$j^{\mu5}=\ol\psi \gamma^\mu\gamma^5T'\psi$, where $\psi$ is a
Fermi multiplet with a diagonal mass matrix $M$, and  $T$ and $T'$
are hermitean $N\times N$ coupling matrices to be determined.

The currents are generally not conserved:
\bea{cnc}
\pa_\mu j^\mu(T)= -i S([T,M]) \qquad 
\pa_\mu j^{\mu5}(T')= i S^5([T',M]_+)
\eea
where $S(X)= \ol\psi X\psi$ and $S^5(X')=\ol\psi\gamma^5X'\psi$ are
scalar and pseudoscalar fields, and $[\cdot,\cdot]_+$ is the matrix
anti-commutator. 

For an arbitrary Lie algebra $\gg$, the most general coupling
separately satisfying the initial first-order condition \eref{LQ1} is
\bea{wtL}
\notag \wt L_1(c) &=& \sumno_a\!\!A_{a,\mu}(c)\big(j^\mu(T_a)+
j^{\mu5}_a(T'_a)\big) -i
\sumno_\ua\!\!\phi_\ua(c)\big(S([T_\ua,M])-S^5([T'_\ua,M]_+)\big)+ \\ &&+
H\big(S(V)+S^5(V')\big).
\eea
Since for ``photons'' ($m_a=0$), $\phi_a$ do not exist, we have to
require $[T_a,M]=0$ and $[T'_a,M]_+=0$ if $m_a=0$. In particular,
photons can couple only to mass-diagonal vector currents and to
massless axial currents. For QCD with massless gluons and massive
quarks, this means that the couplings are non-chiral ($T'=0$) and the
quark masses are color-independent.

The string-independent terms in the second line are not necessary for
\eref{LQ1}, but will be needed to cancel higher-oder obstructions, see
\pref{p:VKM}. Renormalizable higher-order interactions $\wt L_n(c)$
($n>1$) do not exist, hence $\wt L\ww(c) = g\wt L_1(c)$. 

For the particle content of the electroweak interactions with the
specifications given in \sref{s:SM}, the cancellation of all second-order
obstructions was imposed in \cite{chir} to determine the matrices
$T_a$, $T'_a$, $V$ and $V'$ for fermion doublets ($N=2$), under the
\textit{a priori} assumption (``charge conservation'') that $T_A$ and
$T_Z$ and $T'_Z$ are diagonal. In \aref{a:chir}, we show that this
assumption can be dropped, because it also follows from the
cancellation conditions.

The main result of \cite{chir} is that the coupling of the $W$-bosons
is necessarily maximally chiral, that is, $T'_i=\eps T_i$ ($i=1,2$)
where $\eps$ must be a sign. (The physical sign is $\eps=-1$,
corresponding to the ``$V-A$'' structure of the weak interaction.)
A second result is that $\wt L\ww$ coincides with the local Standard
Model interaction $\wt K\ww$ (with indefinite metric) found by gauge
theory with spontaneously broken symmetry -- up to a total derivative
as in \eref{LV1} that ``carries away'' unphysical degrees of freedom
and UV-divergences that make $\wt K\ww$ non-renormalizable
\cite[Scholium~9]{chir}. 

In particular, $V'=0$ (the Higgs coupling is non-chiral), and $V=-KM$
corresponds to the proportionality between Yukawa couplings and
fermion masses. 

In \aref{a:chir}, we re-prove the chirality result in the new setting.
The main point of this re-working is that the cancellation of
obstructions at second order as in \cite{chir} actually implies
\eref{ILQ} at all orders, see \rref{rk:Qwla}.

\section{Conclusion}
\label{s:conc}

We hope to have made a contribution to elucidate the \textit{role}
of gauge invariance -- given that gauge transformations do not affect
observables, and given the trouble conjured up by canonical
quantization of gauge fields. This includes a ``quantum answer''
to the ``classical question''
raised by Fran\c cois after an epistemic discussion in \cite{Fra}: 
\begin{quote}  \sl [$\ldots$] it remains to determine what constitutes
  the proper context of justification for the electroweak theory. [$\ldots$]
  Is there a principle that would make the theory something other than a
  raw fact? 
  \end{quote}
The question naturally arose when the author had characterized the
$SU(2)$ gauge symmetry of the electroweak interactions as being
``artificial''
\footnote{\label{}See also \cite{Hi,Ki,Ly} and \rref{rk:DFM}.}
(i.e., devoid of an operational meaning) because it can be ``erased'' by
a \textit{local} transformation of gauge-dependent fields into
gauge-invariant fields; while only the $U(1)$ gauge symmetry is
``substantial''. Our answer is
\begin{quote} \sl Yes, there is such a  principle. And it is well
  known: it is the need of a Hilbert space in quantum theory. \end{quote}
Let us quote from \cite{Ly}, pointing out that the ``Higgs mechanism''
has no explanatory, but an important heuristic value ``in the context of
discovery''. Lyre writes:
\begin{quote} \sl \textit{[$\ldots$] it is almost impossible to invent or to discover
  $\mathcal{L}_{\rm GSW}'''$ from scratch. It is, instead, more than
  convenient to have some `guiding story' [$\ldots$]}.
\end{quote}
Here, $\mathcal{L}_{\rm GSW}'''$ refers to the non-renormalizable 
\textit{local} Glashow-Salam-Weinberg Lagrangian with the correct
masses after SSB. One would not guess $L\ww(c)$ in \eref{Ltot-EW}, either,
which yields the same S-matrix. But the recursive method of the autonomous
approach via cancellation of obstructions allows to \textit{derive} it
from scratch \cite{chir,weak}, without a heuristic, but physically
misleading ``guiding story'' (as if SSB were a physical process). 

The autonomous approach to particle interactions originally set
out as a potential ``alternative to gauge theory'' \cite{Alter}
altogether. Now, after many improvements, it developped into a more
mature scheme in which gauge invariance \textit{does} plays a role.
However, it is ``downgraded'' from a ``principle'' that determines the SM
interactions, to a property that automatically comes with the
autonomous selection of string-localized interactions $L\ww(c)$.
It signals the consistency of a model with fundamental quantum
principles, and conversely can be used to prove consistency. In the
presence of massive vector bosons, it is ``hidden''.

No classical gauge-invariant Lagrangian has to be quantized. Instead,
already quantized free fields in the Wigner representations of
physical particles are perturbed by string-localized quantum
interactions, that are constrained by the condition that the S-matrix
is string-independent. In order that this condition is fulfilled, the
(hidden) gauge invariance of $L\ww(c)$ is instrumental: it secures
invariance under the derivation $\delta_c+\omega_Q$ (the condition for
string-independence), which acts like a  (field-dependent) gauge
transformation on the string-localized fields. With techniques
exemplified for the abelian Higgs model, one can also show that 
the S-matrices are the same as in local approaches.

Specifically, in the fermionic sector of the electroweak interactions,
the autonomous interaction matches the \textit{local} but
non-renormalizable interactions of massive vector bosons with the
fermions, up to a total derivative that ``carries away'' unphysical
degrees of freedom and UV-divergences responsible for
non-renormalizability.  

But the autonomous strategy is a complete converse of the textbook
narrative, according to which one must introduce a scalar doublet with
an inverted mass term in order to break a postulated chiral gauge
symmetry with massless vector bosons. There, the ratio of coupling
constants for $\u(1)$ and  $\su(2)$ defines the Weinberg angle, which
in turn determines the ratio of the masses that the vector bosons
acquire via SSB. In order to give masses to the fermions, Yukawa
couplings to the scalar doublet have to be introduced by hand. In
contrast, in the autonomous approach, the masses can be prescribed
arbitrarily (except that $m_W$ cannot exceed $m_Z$), the free
particles are quantized on their Wigner Hilbert spaces, and this input
determines a unique string-localized interaction \eref{Ltot-EW} (with
only one coupling constant) whose S-matrix is string-independent. A
scalar doublet never appears (except in the hidden gauge invariance
used as a tool to prove string-independence), and chirality of
fermions and their Yukawa couplings to the Higgs, proportional to
their masses, arise by necessity for string independence.

It is true that many explicit calculations, notably in the electroweak
interaction, are very much reminiscent of familiar manipulations of
classical Lagrangians. But the autonomous selection criterium is
genuinely quantum, referring to the absence of obstructions against
string-independence of the quantum S-matrix, that in turn arise 
through time-ordered quantum two-point functions. 

All assertions in this paper are understood to hold only at tree level. This is
sufficient to determine the interactions, and to show that they coincide
with the SM interactions. For the full perturbation theory, we suggest
that the preservation of the structures established at tree level
should be imposed as a renormalization condition at loop level, that
with the help of \cite{G} might fix infinitely many renormalization
parameters at all orders. Except for (unpublished) case studies that
reveal auspicious cancellations, this task remains open.  

The technical main result of this paper: that an underlying (in the
present setting possibly ``hidden'') gauge invariance secures the
cancellation of obstructions at all orders, is in parallel with an
analogous result obtained in the BRST setting \cite{PGI3}, where gauge
invariance secures the existence of a deformed BRST operator at all
orders \cite{PGI4}.  

\appendix

\section{Basic obstruction theory}
\label{a:obst}

By Wick's theorem, the
obstructions $O_\mu(Y(y),X(x))$ defined in \eref{Om} are Wick derivations in both
arguments. In particular, acting on linear fields $\chi$, the obstruction map $\omega_Y(\cdot)$
defined in \eref{omega} for vector-valued Wick
polynomials $Y^\mu$ is evaluated ``factorwise'' in the integrand:
\bea{factorwise}
\omega_Y(\chi(x)) = \int d^4y\, \sumno_\varphi\sfrac{\pa Y^\mu}{\pa \varphi}
\cdot iO_\mu(\varphi(y),\chi(x)),
\eea
where the sum extends over all linear fields in $Y^\mu$, and it extends to Wick polynomials $X(x)$ as a derivation:
\bea{deriv}
\omega_Y(X(x)) = \sumno_\chi\sfrac{\pa X}{\pa \chi} \cdot
\omega_Y(\chi(x)).
\eea
Thus, all obstructions can be computed from two-point
abstructions \eref{Om} for linear fields.

The relevant two-point obstructions needed in the paper, computed
with kinematic propagators ($c_B=c_H=c_F=0$), can be found in
\cite{AHM,weak,sQCD}: 
All obstructions $O(w,\cdot)$, $O(\phi(c),\cdot)$, $O(H,\cdot)$ are zero
($w$, $\phi$, and $H$ are ``inert'').  Obstructions $O(A(c),\cdot)$
appear only in the antisymmetrized form $O_{[\mu}(A(c)_{\nu]},\cdot)$,
which also vanishes ($A$ is ``skew-inert''). The remaining two-point
obstructions are listed in \tbref{tb:2pt}.

\begin{table}[htb] 
$$
\begin{array}{@{}l||c|c|c|c|c|c|@{}}
&F^{\ka\la}(x) & A^\ka(c,x) & B^\ka(x) & \phi(c,x) & H(x) &\pa^\ka H(x) 
\\ \hline\hline
\mystrut{12}{7} iO_\mu(F^{\mu\nu}(y),\cdot) &
\delta_\nu^{[\ka]}\pa_y^{\la]}\delta_{xy}&
( \delta_\nu^\ka-\pa_y^\ka  I_\nu)\delta_{xy} &
\delta_\nu^\ka\delta_{xy} &
I_\nu\delta_{xy} & 0& 0 \\
\hline
\mystrut{12}{7} iO_\mu(B^\mu(y),\cdot) & 0 & 0 & m^{-2}\pa_y^\ka\delta_{xy}
& m^{-2}\delta_{xy} & 0& 0 \\
\hline
\mystrut{12}{7} iO_\mu(\pa^\mu H(y),\cdot) & 0 & 0 & 0 & 0 & -\delta_{xy} & \pa_y^\ka \delta_{xy}
  \\ \hline 
\end{array}
$$
\caption{Two-point obstructions $O_\mu(\varphi(y),\chi(x))$. Fields
  $B$ and $\phi$ do not exist when $m=0$. ($\delta_{xy}\equiv \delta(x-y)$)}
\label{tb:2pt} 
\end{table}

\newpage

The Master Ward Identity \cite{MWI} for ``$n$-field
obstructions'' is an identity  at tree-level, and can be
  imposed as a renormalization condition at loop level. After integration, it becomes
\footnote{\label{}Without integration,
  there is a subtlety when there are derivatives of $\delta$-functions
  in the two-point obstructions. This is detailed in \cite{MWI}, see
  also \cite[Erratum]{LV}. The subtlety is ineffective after integration, because an integration by parts in \eref{omega} precisely
  takes care of it.}
\bea{MWI}
\omega_Y(X_1(x_1),\dots,X_n(x_n)) &:=& i\int d^4y\,\big(T[\pa^y_\mu
Y^\mu(y)X_1(x_1)\dots X_n(x_n) - \pa^y_\mu T[Y^\mu(y)\ldots]\big) \notag \\
&=& \sumno_{k=1}^n 
T[X_1(x_1)\dots \omega_Y(X_k)(x_k)\dots X_n(x_n)].
\eea
\begin{lemma}\label{l:MWIexp} For arbitrary vector fields $Y^\mu(y)$ of sufficiently rapid decay, it holds
  $$ \omega_Y(e^{i\int
  dx\, X(x)})  = i \int dy\,
T [\pa^y_\mu Y^\mu(y) e^{i\int dx\, X(x)}]  = i\int dx'\, T[\omega_Y(X(x'))e^{i\int
  dx\, X(x)}].$$
\end{lemma}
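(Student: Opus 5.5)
The plan is to derive the lemma from the integrated Master Ward Identity \eref{MWI}, applied order by order in the expansion of the exponential. Expand
$$T[e^{i\int dx\,X(x)}]=\sumno_{n\ge0}\sfrac{i^n}{n!}\int dx_1\cdots dx_n\,T[X(x_1)\cdots X(x_n)].$$
By definition \eref{omega}, $\omega_Y$ acts on a time-ordered product as in the first line of \eref{MWI}. For the first equality, I will write $\omega_Y(T[X(x_1)\cdots X(x_n)])$ as $i\int dy\,\big(T[\pa_\mu Y^\mu(y)X(x_1)\cdots]-\pa^y_\mu T[Y^\mu(y)X(x_1)\cdots]\big)$. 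The second term is the $y$-integral of a total derivative, and it vanishes by the decay assumption on $Y$ (cf.\ the footnote after \eref{LQ1}). This leaves $i\int dy\,T[\pa_\mu Y^\mu(y)\,X(x_1)\cdots X(x_n)]$. Summing over $n$ with the weights $i^n/n!$ and using linearity of $T$ then gives $i\int dy\,T[\pa_\mu Y^\mu(y)e^{i\int X}]$.

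For the second equality, I will use the right-hand side of \eref{MWI}:
$$\omega_Y(X(x_1),\dots,X(x_n))=\sumno_{k=1}^n T[X(x_1)\cdots\omega_Y(X)(x_k)\cdots X(x_n)].$$
After integrating over $x_1,\dots,x_n$, the $n$ summands are equal by symmetry of time-ordered products in their arguments, so the sum becomes $n\int dx'\,T[\omega_Y(X)(x')\,X(x_1)\cdots X(x_{n-1})]$. With the prefactor $i^n/n!$ this equals $i\cdot\frac{i^{n-1}}{(n-1)!}\int dx'\int dx_1\cdots dx_{n-1}\,T[\omega_Y(X)(x')X\cdots X]$. Resumming over $n\ge1$ reproduces $i\int dx'\,T[\omega_Y(X(x'))e^{i\int dx\,X(x)}]$. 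The $n=0$ term vanishes on both sides because $\int dy\,\pa_\mu Y^\mu=0$.

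The main point needing care is that \eref{MWI} holds only at tree level and only after integration. The footnote's subtlety, namely derivatives of $\delta$-functions in the two-point obstructions, is harmless here because we integrate over $y$ from the outset. I will therefore read all equalities as identities at tree level, in the sense of formal power series, consistent with the standing convention of \sref{s:obst}. The other point to check is that $\omega_Y(X)$ is again a Wick polynomial. This holds by \eref{factorwise} and \eref{deriv}, so it can legitimately be inserted into a time-ordered product. The decay hypothesis on $Y$ is exactly what kills the boundary term.
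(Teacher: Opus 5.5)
Your proposal is correct and is essentially the paper's proof: after expanding the exponential, the first equality is, term by term, the defining first line of \eref{MWI}, with the integral over the total $y$-derivative dropped by the decay assumption, and the second equality is the resummation of the right-hand side of \eref{MWI}. Your extra bookkeeping only makes explicit what the paper leaves implicit: using the symmetry of $T$ to collect the $n$ equal summands into the factor $i\cdot i^{n-1}/(n-1)!$, and noting that the $n=0$ term vanishes.
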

\textit{Proof:} When the exponential is expanded, the first equality
is -- term by term -- the definition in \eref{MWI}, where the integral over a derivative
vanishes. The second equality is the resummation of the exponential in
the right-hand side of \eref{MWI}. \qed

\section{Electroweak fermion couplings}
\label{a:chir}

Starting with an arbitrary Lie algebra $\gg$ and fermion multiplets of
size $N$, we study the constraints on the fermionic coupling 
matrices $T_a^{(\prime)}$, $V^{(\prime)}$ in \eref{wtL} imposed
by the condition \eref{ILQ}  for $L\ww+\wt L\ww$ and $Q + \wt Q$.
As for QCD (see \sref{s:SM}, Item~2), the method using hidden gauge
invariance fails, because property (P1) fails. Therefore, we proceed
by a direct analysis. 

The initial first-order condition \eref{LQ1} for $\wt L_1(c)$ requires
the relations
\bea{TA}
[T_a,M]=[T'_a,M]_+=0\qquad\hbox{if}\quad m_a=0
\eea
and determines $\wt Q_1$. In view of \rref{rk:Qwla}, let $\lambda$ as
in \lref{l:sGT}, and  
\bea{wtQ}
\wt Q^\mu = g\sumno_a\lambda_a\sfrac{\pa L_1(c)}{\pa A_{a,\mu}(c)}=
g\sumno_a\lambda_a\big(j^\mu(T_a)+ j^{\mu5}(T'_a)\big).
\eea
By \pref{p:EW-LQ}, the bosonic interaction $L\ww$ in \eref{Ltot-EW}
and $Q$ in \eref{Q-EW} separately solve \eref{ILQ}. Thus, we have to
consider only the additional obstructions from the fermionic couplings:
\bea{ILQ-f}
\omega_{\wt Q}(L\ww) + \omega_Q(\wt L\ww) + \omega_{\wt Q}(\wt L\ww)
- \pa_\mu \wt Q^\mu\stackrel!=0.
\eea
The first term is zero because $\lambda$ is inert. $\omega_Q$ in the second
term acts only on $A$, $\phi$, and $H$, as given in \tbref{tb:omQ-EW}.
For $\omega_{\wt Q}(\wt L\ww)$, we need \tbref{tb:jj}
(see \cite[Eq.~(B.1)]{chir} and \cite[App.~A.2]{sQCD}).

\begin{table}[htb] 
$$
\begin{array}{@{}l||c|c|c|c|@{}}
& j^\mu(X) & j^{\mu5}(X) & S(X) & S^5(X) 
\\ \hline\hline
\mystrut{12}{7} \omega_{\la j(T)} &-i \la j^\mu([T,X])&-i\la j^{5\mu}([T,X])&-i\la S[T,X]&
  -i\la S^5[T,X]\\
  \hline
  \mystrut{12}{7} \omega_{\la j(T)} &-i g\la j^{5\mu}([T,X]) &-i\la j^\mu([T,X])&i\la S^5[T,X]_+&i\la S[T,X]_+\\
\end{array}
$$
\caption{Fermionic obstructions.}
\label{tb:jj} 
\end{table}

Thus, all terms in \eref{ILQ-f} are of the form (inert fields) times
$Aj,Aj^5, HS, HS^5, \phi S,$ or $\phi S^5$, which must vanish separately.
This yields six conditions on top of \eref{TA}:
\bea{}\notag
\sumno_a\big(\rho_a+\pa I(\rho_a) + \pa w_a-\pa \la_a\big)  j(T_a)
-ig\sumno_{ab}\la_a A_b j([T_a,T_b]+[T'_a,T'_b]) =0,
\\ \notag
\sumno_a\big(\rho_a+\pa I(\rho_a) + \pa w_a-\pa \la_a\big)  j^5(T'_a)
-ig\sumno_{ab}\la_a A_b j^5([T_a,T'_b]+[T'_a,T_b]) =0, \\ \notag
\sumno_a i\lambda_a H S\big(-K [T_a,M]- [T_a,V]+[T'_a,V']_+\big) =0,   
\\ \notag
\sumno_a i\lambda_a H S^5\big(K[T'_a,M]_+ +[T'_a,V]_+ - [T_a,V']\big) =0,
\\ \notag
\sumno_{\ua b}\la_b\phi_\ua S\big(\sumno_\uc\sfrac{im_\ua}{m_\uc} \ga_{\ua b\uc} [T_c,M]
-Km_\ua^2\delta_{\ua b}V - [T_b,[T_\ua,M]]+[T'_b,[T'_\ua,M]_+]_+\big)=0,
\\ \notag
\sumno_{\ua b}\la_b\phi_\ua S^5\big(-\sumno_\uc\sfrac{im_\ua}{m_\uc} \ga_{\ua b\uc}
[T'_\uc,M]_+ -Km_\ua^2\delta_{\ua b}V' + [T_b,[T'_\ua,M]_+]+[T'_b,[T_\ua,M]]_+\big)=0,
\eea
where terms involving underlined indices $\ua$ are to be suppressed if
$m_a=0$.

In the last two lines, we have omitted a term with prefactor
$I(\rho_a)+w_a-\la_a$, which vanishes by \eref{wGT}. Likewise by
\eref{pawGT}, in the first two lines we have $\rho_a+\pa I(\rho_a) +
\pa w_a-\pa \la_a = gf_{abc}\la_b A_c$.

After these preparatory
cancellations, all remaining obstructions are of the form $\lambda$
times $j^{(5)}$ or $HS^{(5)}$ or $\phi S^{(5)}$. They exactly coincide
with the second-order obstructions, with just $w$ replaced by
$\lambda(g)=w+O(g)$, as announced in \rref{rk:Qwla}. In particular,
string-independence at second order already implies the same at all orders.

The conditions can be re-written as matrix relations:
\bea{jj}
i\big([T_a,T_b]+ [T'_a,T'_b]\big) &=& \sumno_c f_{abc} \cdot T_c,
\\ \label{jj5}
i\big([T_a,T'_b]+[T'_a,T_b]\big)
&=& \sumno_c f_{abc}\cdot T'_c,
\\ \label{HS} - [T_a,V]+[T_a',V']_+  &=& K[T_a,M],   
\\ \label{HS5} - [T'_a,V]_++[T_a,V'] &=& K[T'_a,M]_+,\qquad
\\ \label{phiS} \sumno_\uc \sfrac{im_\ua}{m_\uc} \ga_{\ua b\uc}
[T_\uc,M] -[T_b,[T_\ua,M]]-[T'_b,[T'_\ua,M]_+]_+&=&
Km_\ua^2\delta_{\ua b}V,
\\ \label{phiS5} \sumno_\uc \sfrac{im_\ua}{m_\uc} \ga_{\ua b\uc}
[T'_\uc,M]_+  - [T_b,[T'_\ua,M]_+]-[T'_b,[T_\ua,M]]_+&=&
-Km_\ua^2\delta_{\ua b}V'.
\eea
\eref{jj} $\pm$ \eref{jj5} assert that both $T_a^\pm:= T_a\pm T'_a$
are representations of the Lie algebra $\gg$: 
\bea{pipma}i[T^\pm_a,T^\pm_b] = \sumno_c f_{abc}
\cdot T^\pm_c \qquad\Leftrightarrow\qquad  iT_a^\pm = \pi^\pm(\tau_a).
\eea
\eref{HS} $\pm$ \eref{HS5} can be written as intertwining relations
between $\pi^+$ and $\pi^-$:
\bea{intertwine} (KM+V\pm V')\pi^\pm(\tau)=\pi^\mp(\tau)(KM+V\pm V').
\eea

We now specify $\gg=\u(1)\oplus\su(2)$ for the electroweak
interactions with structure constants
$f_{A12}=\sin\theta$, $f_{12Z}=\cos\theta$ both $\neq0$.
We consider Fermi doublets, $N=2$, in a mass eigenbasis with masses
$\mu_2>\mu_1$.  (The case of equal masses $\mu_1=\mu_2$ requires a
separate analysis.) 

Let $T^\pm_0:=\cos\theta \,T_A^\pm\,-\sin\theta\, T_Z^\pm=-i\pi^\pm(\tau_0)$
and $T^\pm_3:=\sin\theta \,T_A^\pm+\cos\theta\, T_Z^\pm$. Then
\bea{pipmi}
T_0^\pm = -i\pi^\pm (\tau_0),\qquad T_i^\pm=-\sfrac12\pi^\pm(\sigma_i) 
\qquad (i=1,2,3),
\eea
where $\tau_0$ is the (imaginary) generator of $\u(1)$. $\pi^+$ and $\pi^-$
are (each) either trivial on $\su(2)$ or unitarily equivalent to the
identical representation. In the latter case, $T^\pm_0$ is a multiple of $\eins$.

The next Lemma asserts that ``charge conservation'' in the
interaction vertices, that was assumed \textit{a priori} in
\cite{chir}, is actually a consequence of the autonomous approach that does not assume
any kind of \textit{a priori} symmetry, including charge conservation. 
\begin{lemma}\label{l:diag} (``Charge conservation'')
  If $\mu_2>\mu_1$, the matrices $T_A$, $T_A'$, $T_Z$, and $T'_Z$ are
  all diagonal. If the neutrino is massless ($\mu_1=0$), i.e.,
  $M =\bpm 0&\\[-2mm]&\mu_2\epm=:\mu_2P$, then $T'_A$ is a multiple of
$P^\perp:=\eins-P$, otherwise $T'_A=0$. The matrix $V$ is diagonal, and $V'=0$.
\end{lemma}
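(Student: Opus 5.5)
The plan is to extract everything from the massless photon condition \eref{TA}, the representation property \eref{pipmi} of $T^\pm_a=T_a\pm T'_a$, and the diagonal ($\ua=b=Z$) components of \eref{phiS} and \eref{phiS5}. I work throughout in the mass eigenbasis, where $M=\mathrm{diag}(\mu_1,\mu_2)$ with $\mu_2>\mu_1\geq0$.

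\textbf{Step 1: the photon matrices.} Since $m_A=0$, \eref{TA} gives $[T_A,M]=0$ and $[T'_A,M]_+=0$. The first relation reads $(\mu_i-\mu_j)(T_A)_{ij}=0$, so $T_A$ is diagonal because $\mu_1\neq\mu_2$. The second reads $(\mu_i+\mu_j)(T'_A)_{ij}=0$.
\begin{itemize}
\item If $\mu_1>0$, every sum $\mu_i+\mu_j$ is positive, so $T'_A=0$.
\item If $\mu_1=0$, only the $(1,1)$ entry survives, so $T'_A$ is a multiple of $P^\perp$.
\end{itemize}
In either case $T'_A$ is diagonal, hence so are both $T^\pm_A$.

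\textbf{Step 2: the $Z$ matrices.} I invert the Weinberg rotation:
$$T^\pm_Z=\cos\theta\, T^\pm_3-\sin\theta\, T^\pm_0,\qquad T^\pm_A=\sin\theta\, T^\pm_3+\cos\theta\, T^\pm_0 .$$
Here $\cos\theta=m_W/m_Z\neq0$ by \eref{spec}, and $\sin\theta=f_{A12}\neq0$. By the remark after \eref{pipmi}, each of $\pi^\pm$ falls into one of two cases, and I treat each sign separately.
\begin{itemize}
\item If $\pi^\pm$ is trivial on $\su(2)$, then $T^\pm_3=0$. So $T^\pm_0=T^\pm_A/\cos\theta$ is diagonal, and hence $T^\pm_Z=-\tan\theta\,T^\pm_A$ is diagonal.
\item If $\pi^\pm$ is equivalent to the identical representation, then $T^\pm_0$ is a multiple of $\eins$. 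Then $\sin\theta\,T^\pm_3=T^\pm_A-\cos\theta\,T^\pm_0$ is diagonal, so $T^\pm_3$ is diagonal, and again $T^\pm_Z$ is diagonal.
\end{itemize}
Hence $T_Z=\frac12(T^+_Z+T^-_Z)$ and $T'_Z=\frac12(T^+_Z-T^-_Z)$ are diagonal.

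\textbf{Step 3: $V$ and $V'$.} I specialize \eref{phiS} and \eref{phiS5} to $\ua=b=Z$. The $\ga$-terms drop, because $\ga_{ZZ\uc}\propto f_{ZZ\uc}=0$ by antisymmetry. Since $T_Z$, $T'_Z$ and $M$ are all diagonal, $[T_Z,M]=0$ and $[T'_Z,M]_+$ is diagonal. In \eref{phiS5}, the term $[T_Z,[T'_Z,M]_+]$ is a commutator of diagonal matrices, so it vanishes, and $[T'_Z,[T_Z,M]]_+=0$. This leaves $Km_Z^2V'=0$, and since $K=\frac1{2m_W}\neq0$ we get $V'=0$. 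In \eref{phiS}, the equation becomes $Km_Z^2V=-[T'_Z,[T'_Z,M]_+]_+$, which is diagonal. Hence $V$ is diagonal.

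The only delicate point is Step 2. It needs the case distinction on $\pi^\pm$ from \eref{pipmi}, together with the nonvanishing of $\sin\theta$ and $\cos\theta$; without these, the rotation back from the basis $(T_0,T_3)$ to $(T_A,T_Z)$ would not transport diagonality. Steps 1 and 3 are direct entrywise computations.
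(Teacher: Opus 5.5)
Your proof is correct and follows the paper's argument. You get diagonality of $T_A$ and the form of $T'_A$ from \eref{TA}. You then split into cases according to whether $\pi^\pm$ is trivial on $\su(2)$ (giving $T^\pm_3=0$ or $T^\pm_0\propto\eins$) to carry diagonality over to $T_Z$ and $T'_Z$. Finally, \eref{phiS} and \eref{phiS5} with $\ua=b=Z$ give $V$ and $V'$. Your sign-by-sign treatment, with explicit use of $\sin\theta\neq0$ and $\cos\theta\neq0$, is simply a more detailed version of the paper's terse remark that in each case four linearly independent combinations of $T_A,T'_A,T_Z,T'_Z$ are diagonal.
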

\textit{Proof:} $[T_A,M]=0$ in \eref{TA} implies that $T_A$ is diagonal.
If $\mu_1>0$, then $[T'_A,M]_+=0$  implies $T'_A=0$. If $\mu_1=0$, it
implies that $T'_A$ is a  multiple of $P^\perp$. 

If $\pi^+$ and $\pi^-$ are both nontrivial on $\su(2)$, then both $T^\pm_0$
must be multiples of $\eins$. If, say, $\pi^+$ is nontrivial and
$\pi^-$ is trivial on $\su(2)$, then  $T^+_0$ is a multiple of $\eins$
and $T^-_3=0$. If $\pi^+$ and $\pi^-$ are both trivial on $\su(2)$,
then both $T^\pm_3=0$. In each case, we have four linearly independent
combinations of $T_A$, $T_A'$, $T_Z$, and $T'_Z$ that are diagonal.

The last statement follows from \eref{phiS} and \eref{phiS5} with
$a=b=Z$. \qed

With the assumption of ``charge conservation'' fulfilled by
\lref{l:diag}, there is no need to repeat the detailed determination
of the coupling matrices $T_a$ and $T'_a$ done in \cite{chir}.
Nevertheless, we want to reproduce here the central results of
\cite{chir}, but in a  more ``algebraic'' way.

When we discard the physically uninteresting case that the W-bosons do not
couple to fermions, the next Proposition is the autonomous prediction of
maximal chirality of the weak interactions \cite{chir}: it is another necessary
condition for string-independence.
\begin{prop}\label{p:chir} (``The chirality theorem'')
If $\mu_2>\mu_1$, then either $\pi^+$ or $\pi^-$ is trivial on $\su(2)$.
If both are trivial, then the W-bosons cannot couple to the fermions.
\end{prop}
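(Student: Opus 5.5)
The plan is to combine the representation-theoretic facts already extracted in \eref{pipma}--\eref{pipmi} with \lref{l:diag}. The remaining input is the intertwining relation \eref{intertwine} and the $\phi S$-conditions \eref{phiS} with massive indices. The second assertion is immediate. If $\pi^+$ and $\pi^-$ are both trivial on $\su(2)$, then \eref{pipmi} gives $T^\pm_i=0$ for $i=1,2$. Hence $T_i=\frac12(T_i^++T_i^-)=0$ and $T'_i=\frac12(T_i^+-T_i^-)=0$, so the $W$-bosons do not appear in $\wt L_1(c)$ in \eref{wtL}. The whole work is therefore to exclude the case in which \emph{both} $\pi^\pm$ are equivalent to the defining representation of $\su(2)$.

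In that case I would first normalise the representations. By \lref{l:diag}, $T_Z^{(\prime)}$ and $T_A^{(\prime)}$ are diagonal, hence so are $T^\pm_3$. An irreducible $2$-dimensional representation with $\pi^\pm(\sigma_3)$ diagonal has $\pi^\pm(\sigma_3)=\pm'\sigma_3$ and off-diagonal $\pi^\pm(\sigma_{1,2})$, determined up to a diagonal unitary. Moreover $T^\pm_0$ are multiples of $\eins$. By \lref{l:diag}, $V'=0$ and $V$ is diagonal, so $X:=KM+V$ is diagonal, and \eref{intertwine} reads $X\pi^\pm(\tau)=\pi^\mp(\tau)X$. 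Schur's lemma for the two irreducible representations gives two cases. Either $X=0$, or $X$ is invertible, in which case $\pi^+(\sigma_3)=\pi^-(\sigma_3)$ and $X$ intertwines two off-diagonal matrices. The latter forces $|X_{11}|=|X_{22}|$, or more precisely $X_{11}=X_{22}$ up to the relative phase of $\pi^\pm(\sigma_1)$.

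Next I would compute $V$ explicitly from \eref{phiS} with $\ua=b=Z$. The antisymmetry $\ga_{ZbZ}=0$ kills the first sum. Since $T_Z$ is diagonal, $[T_Z,[T_Z,M]]=0$, which leaves $Km_Z^2V=-[T'_Z,[T'_Z,M]_+]_+=-4T_Z'^2M$ (using diagonality of $T'_Z$). Thus $X=K\big(\eins-\frac{4}{K^2m_Z^2}T_Z'^2\big)M$. In the case $X$ invertible we have $\pi^+(\sigma_3)=\pi^-(\sigma_3)$ and $T^\pm_0\propto\eins$. Then $T'_3=\frac12(T_3^+-T_3^-)$ and $T'_0$ are multiples of $\eins$, so $T'_Z\propto\eins$. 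Hence $X\propto M$, and $|X_{11}|=|X_{22}|$ contradicts $\mu_2>\mu_1\ge0$ unless the prefactor vanishes, which is the case $X=0$. It therefore remains to kill $X=0$, i.e.\ $V=-KM$ together with both $\pi^\pm$ nontrivial.

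For this last step, which I expect to be the main obstacle, I would use \eref{phiS} and \eref{phiS5} with $\ua=b=1$ (a $W$-index). There $\ga_{11\uc}=0$, so the condition becomes $-[T_1,[T_1,M]]-[T'_1,[T'_1,M]_+]_+=Km_W^2V=-K^2m_W^2M$. The matrices $T_1,T'_1$ are off-diagonal combinations of $\pi^\pm(\sigma_1)$ with the normalisations fixed above. Evaluating the double (anti)commutators with $M=\mathrm{diag}(\mu_1,\mu_2)$ gives diagonal entries that are symmetric functions of $\mu_1,\mu_2$, essentially $\propto(\mu_1\mp\mu_2)$ combinations, times $|t|^2,|t'|^2$. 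Matching with $-K^2m_W^2\,\mathrm{diag}(\mu_1,\mu_2)$ should be impossible for $\mu_2\ne\mu_1$. If this alone does not close, I would add the $S^5$ equation \eref{phiS5} with $V'=0$, and the relation $K=\frac1{2m_W}$ from \eref{spec}, to overdetermine the off-diagonal entries.
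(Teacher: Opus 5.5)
Your proof is correct in outline and follows a genuinely different route from the paper's. The last step, however, needs the fallback you list; \eref{phiS} alone does not close it.

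\textbf{Comparison of routes.} The paper splits on the relative sign $T_3^+=\pm T_3^-$ and works only with diagonal ($A$, $Z$) data.
\begin{itemize}
\item For equal signs, it uses the photon constraint on $T'_A$ from \lref{l:diag} to get $T'_A=T'_Z=0$. Hence $V=V'=0$, which clashes with \eref{intertwine}.
\item For opposite signs, it concludes via $T'_A=0$ if $\mu_1>0$. If $\mu_1=0$, it solves for $T'_Z$ and reaches a numerical contradiction in \eref{phiS} at $\ua=b=Z$, using both $K=\frac1{2m_W}$ and $m_W/m_Z=\cos\theta$.
\end{itemize}
You split instead on Schur's dichotomy for the diagonal intertwiner $X=KM+V$.
\begin{itemize}
\item Your case $X\neq0$ is sound and does not need the constraint on $T'_A$. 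The conditions $T'_3=0$ and $T'_0\propto\eins$ already give $T'_Z=-\sin\theta\,T'_0\propto\eins$, so $X\propto M$, which cannot have $|X_{11}|=|X_{22}|\neq0$.
\item Your case $X=0$ moves to the $W$-channel. This treats $\mu_1=0$ and $\mu_1>0$ uniformly, at the price of handling the off-diagonal $T_1,T'_1$.
\end{itemize}
One minor slip: the first sum in \eref{phiS} at $\ua=b=Z$ vanishes because $f_{ZZ\uc}=0$. The identity $\ga_{ZbZ}=0$ is not the relevant fact.

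\textbf{The final step.} Let $t$ and $t'$ be the upper off-diagonal entries of $T_1$ and $T'_1$. Then $t=-\frac14(u_++u_-)$ and $t'=-\frac14(u_+-u_-)$ with $|u_\pm|=1$. One computes
$-[T_1,[T_1,M]]-[T'_1,[T'_1,M]_+]_+=2|t|^2(\mu_2-\mu_1)\sigma_3-2|t'|^2(\mu_1+\mu_2)\eins$.
With $V=-KM$ and $\mu_2>\mu_1$, this yields only $|t|^2=|t'|^2=\frac14K^2m_W^2$.

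For two nontrivial representations, $|t|^2+|t'|^2=\frac14$. So the equation is solvable, with $\mathrm{Re}(u_+\bar u_-)=0$, whenever $K^2m_W^2=\frac12$. Your expectation that matching is impossible for $\mu_2\neq\mu_1$ is therefore false as stated. Your fallback is indispensable, and either option works:
\begin{itemize}
\item $K=\frac1{2m_W}$ from \eref{spec} gives $K^2m_W^2=\frac14\neq\frac12$.
\item \eref{phiS5} at $\ua=b=1$ with $V'=0$ gives $\mathrm{Im}(t\bar t')=-\frac18\,\mathrm{Im}(u_+\bar u_-)=0$. This is incompatible with $\mathrm{Re}(u_+\bar u_-)=0$.
\end{itemize}

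With the second option, your argument needs neither the precise value of $K$ nor the relation $m_W/m_Z=\cos\theta$, both of which the paper's $\mu_1=0$ subcase uses. The paper's route, in turn, never has to parametrize the off-diagonal $W$-couplings.
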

The condition $\mu_1\neq\mu_2$ is crucial. If $\mu_1=\mu_2$,
\lref{l:diag} does not hold. Then, $T'_a=0$ for all $a$
and $V=V'=0$ is a solution, and $\pi^+=\pi^-$ may be nontrivial on $\su(2)$.

\textit{Proof of \pref{p:chir}:} Assume that $\pi^\pm$ are \textit{both}
nontrivial on $\su(2)$. Then $T^\pm_0$ are both multiples of $\eins$,
hence $T_0=\cos\theta \,T_A-\sin\theta\, T_Z$ and
$T'_0=\cos\theta \,T'_A-\sin\theta\, T'_Z$ are multiples of $\eins$.

By \lref{l:diag},  $T^\pm_3 $ are diagonal, hence both multiples
$\frac 12\sigma_3$, but may differ by a sign: $T_3^+=\pm T_3^-$. 

We first exclude the case $T_3^+=T_3^-$. Then $T'_3=0$, hence along with
$T'_A$, also $T'_Z$ and $T'_0$ would be multiples of $P^\perp$. But
$T'_0$ is a multiple of $\eins$, hence $T'_0=0$, hence also $T'_A=T'_Z=0$.

\newpage

Because $T_Z$ is diagonal and $T'_Z=0$, \eref{phiS}
and \eref{phiS5} with $a=b=Z$ would imply that $V=V'=0$. But this conflicts with
\eref{intertwine} because $M$ is not a multiple of a unitary.

The case $T_3^+=-T_3^-$, hence $T_3=0$ and $T'_3=\pm \frac12\sigma_3$,
is easily excluded if $\mu_1>0$: Namely, if $T'_A=0$ and
$T'_0$ is a multiple of $\eins$, then also $T'_3$ must be a multiple of $\eins$.

If $\mu_1=0$, the conditions $T'_0=a\eins$ and
$T'_A=bP^\perp$ determine 
$T'_Z=\pm\frac1{2\cos\theta}\bpm \cos2\theta &\\[-2mm]&1\epm$.
On the other hand, the intertwiner in \eref{intertwine} is diagonal by
\lref{l:diag}, and cannot switch the sign of $\sigma_3$, unless
$KM+V=V'=0$. The insertion of $T'_Z$ and $V=-KM$ into \eref{phiS} with
$a=b=Z$ gives a numerical contradiction with $K=\frac1{2m_W}$ from
\eref{spec}. 

Finally, if both representations $\pi^\pm$
are trivial on $\su(2)$, then $T_i=T'_i=0$ ($i=1,2,3$).
\qed

String-independence does not decide whether $\pi^+$ or $\pi^-$ is
nontrivial on $\su(2)$. Nature chooses $\pi^-$ (the ``$V-A$''
structure of the weak interaction). 

\begin{prop}\label{p:VKM} (``Yukawa couplings'')
  In the chiral case, it holds $V=-KM$.
\end{prop}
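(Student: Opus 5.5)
The plan is to derive $V=-KM$ purely algebraically from the intertwining relations \eref{intertwine}, in the situation singled out by \pref{p:chir}. We only consider the chiral case, where exactly one of $\pi^+$, $\pi^-$ is nontrivial on $\su(2)$. By symmetry of the argument under $\pm$, we may assume that $\pi^+$ is nontrivial and $\pi^-$ is trivial on $\su(2)$; the other case is literally the same with the roles exchanged. As noted before \lref{l:diag}, the nontrivial one is then unitarily equivalent to the identical (doublet) representation. So for $i=1,2,3$ we have $\pi^+(\tau_i)=-\frac i2 U\sigma_iU^*$ for some unitary $U$, while $\pi^-(\tau_i)=0$. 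In particular, each $\pi^+(\tau_i)$ is an invertible $2\times2$ matrix.

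Next, write $X_\pm:=KM+V\pm V'$ and evaluate \eref{intertwine} on the $\su(2)$ generators $\tau=\tau_i$. The relation with the upper sign reads $X_+\pi^+(\tau_i)=\pi^-(\tau_i)X_+=0$. Multiplying from the right by $\pi^+(\tau_i)^{-1}$ gives $X_+=0$. The relation with the lower sign reads $X_-\pi^-(\tau_i)=\pi^+(\tau_i)X_-$. Its left-hand side vanishes, so multiplying from the left by $\pi^+(\tau_i)^{-1}$ gives $X_-=0$. Adding and subtracting, we obtain $V'=0$ and $KM+V=0$, i.e.\ $V=-KM$. This is consistent with \lref{l:diag}, which already gave $V'=0$ and $V$ diagonal, so that one could alternatively use only the single intertwiner $KM+V$.

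The only point requiring care, and what I regard as the main (mild) obstacle, is the justification of the input: namely, that ``nontrivial on $\su(2)$'' means equivalence to the irreducible doublet, so that the generators act invertibly. For $N=2$ this holds because a nontrivial representation of $\su(2)$ on $\mathbb C^2$ is irreducible, hence the spin-$\frac12$ representation. It also matters that the $\u(1)$ generator $\tau_0$ is not used, since it may act nontrivially in both $\pi^\pm$ and would spoil the argument. Note that the relations \eref{phiS}, \eref{phiS5} are not needed for this step. Afterwards I would remark that $V=-KM$ reproduces the proportionality of the Yukawa couplings to the fermion masses, with factor $K=\frac1{2m_W}=\frac g2\cdot\frac1v$ by \eref{spec} and \eref{ggv}.
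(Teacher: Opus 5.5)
Your argument is correct and is essentially the paper's proof. The paper simply invokes that $\pi^+$ and $\pi^-$ are inequivalent (the doublet shares no irreducible component with the trivial two-dimensional representation) to conclude $KM+V\pm V'=0$ from \eref{intertwine}, and your use of the invertibility of the doublet generators is just an explicit version of that Schur-type step. (One slip in your closing side remark: with $v=\frac{2m_W}g$ one has $K=\frac1{2m_W}=\frac1{gv}$, so the Yukawa coupling is $gK=\frac1v$, not $K=\frac g2\cdot\frac1v$.)
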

The vanishing of $V'$ (\lref{l:diag}) means that the Higgs couples
non-chirally to the fermions. $S(V)=-KS(M)$ asserts that the Yukawa
couplings are proportional to the Fermi masses. 

\textit{Proof of \pref{p:VKM}:} Because $\pi^+$ and $\pi^-$ are
inequivalent as representations of $\su(2)$, \eref{intertwine} implies
$KM+V\pm V'=0$. \qed

\bigskip

Just for the sake of completeness, we present the final result of \cite{chir}:
$T'_A=0$ because the neutrino was assumed massive, cf.~\eref{TA}, and
$$T_i=-T'_i = -\sfrac14 \sigma_i \quad (i=1,2), \quad T_A = \bpm 0& \\
&s \epm, \quad T_Z=  \sfrac1{4c}\bpm -1 & \\ & 1-4s^2 \epm,
\quad T'_Z= \sfrac 1{4c}\sigma_3, $$
where $s\equiv \sin\theta$, $c\equiv \cos\theta$. Conditions
\eref{jj}--\eref{phiS5} can be verified one by one by elementary
matrix calculus. The coupling matrix $T_A$ asserts that the electric
unit of charge is $e=g\,\sin\theta$. The eigenvalues of
$T^\pm_0=\cos\theta\, T_A-\sin\theta\, T^\pm_Z$ are the hypercharges
times $\frac{g_1}g=\frac12\tan\theta$. The displayed solution is unique,
except that one may add a multiple of $\eins$ to $T_A$, Thus, also
quark doublets are covered.

\bigskip

{\bf Acknowledgments.} We thank J. Mund, J.M. Gracia-Bond\'ia and 
B. Schroer for valuable comments on earlier versions of the paper.


\begin{thebibliography}{99} \itemsep-1.6mm
\bibitem
  {MWI} F.-M. Boas, M. Dütsch: The Master Ward Identity.
  Rev. Math. Phys. 14 (2002) 977--1049.
 \bibitem
   {Bu} D. Buchholz: Gauss’ law and the infraparticle problem.
   Phys. Lett. B 174 (1986) 331--334.    
\bibitem
  {Di} P.A.M. Dirac: Gauge-invariant formulation of Quantum Electrodynamics.
  Can. J. Phys. 33 (1955) 650--660.
\bibitem
  {PGI4} M. Dütsch: Proof of perturbative gauge invariance for tree
  diagrams to all orders.
  Ann. Phys. (Leipzig) 14 (2005) 438--61.
\bibitem
  {Ear1} J. Earman: Curie’s principle and spontaneous symmetry breaking.
  Intl. Studies Philos. Sci. 18 (2004) 173--198.
\bibitem
  {Ear2} J. Earman: Explaining the Aharonov-Bohm effect. 
  https://philsci-archive.pitt.edu/id/ eprint/22970.
\bibitem
  {Fra} J. Fran\c cois: Artificial vs substantial gauge symmetries: a
  criterion and an application to the electroweak model.
  Phil. of Sci. 86 (2019) 472--496, arXiv:1801.00678.
\bibitem
  {FMS1} J. Fröhlich, G. Morchio, F. Strocchi: Charged sectors and
  scattering states in quantum electrodynamics.
  Ann. Phys. 119 (1979) 241--284.
\bibitem
  {FMS2} J. Fröhlich, G. Morchio, F. Strocchi: Higgs phenomenon without
  symmetry breaking parameter.
  Nucl. Phys. B190 [FS3] (1981) 553--582.
  \bibitem
    {G} Ch. Gass: Renormalization in string-localized field theories:
    a microlocal analysis.
    Ann. Henri Poincar\'e 23 (2022) 3493--3523. 
  \bibitem
  {GGM} Ch. Gass, J.M. Gracia-Bond\'ia,  J. Mund:
  Revisiting the Okubo–Marshak argument.
  Symmetry 13 (2021) 1645.
\bibitem
  {chir} J.M. Gracia-Bondía, J. Mund, J.C. Várilly: The chirality theorem.
  Ann. Henri Poincaré 19 (2018) 843--874.
\bibitem
  {weak} J.M. Gracia-Bondía, K.-H. Rehren, J.C. Várilly: The full
  electroweak interaction: an autonomous account.
  Ann. Henri Poincaré 26 (2025) 4529--4574.
\bibitem
  {sQCD} I. Hemprich, K.-H. Rehren: Dressed fields for quantum chromodynamics.
  Lett.\ Math.\ Phys.\ 115 (2025) 112.
\bibitem
  {Hi} P.W. Higgs: Spontaneous symmetry breakdown without massless bosons.
  Phys. Rev. 145 (1966) 1156--1163.
\bibitem
  {Ki} T.W.B. Kibble: Symmetry breaking in non-abelian gauge theories.
  Phys. Rev. 155 (1967) 1554-1561.
\bibitem
  {Jo} P. Jordan: Der gegenwärtige Stand der Quantenelektrodynamik (in German).
  In: Talks and discussions of the theoretical-physical
  conference in Kharkov (19–25 May 1929). Phys. Z. XXX (1929) 700--712.
\bibitem
  {Ly} H. Lyre: Does the Higgs Mechanism exist?
  Intl. Studies Philos. Sci. 22 (2008) 119--133.
\bibitem
  {infra} J. Mund, K.-H. Rehren, B. Schroer:   Infraparticle quantum
  fields and the formation of photon clouds. 
  JHEP 04 (2022) 083. 
\bibitem
  {AHM} J. Mund, K.-H. Rehren, B. Schroer: How the Higgs potential got its shape.
  Nucl. Phys. B 987 (2023) 116109.
\bibitem
  {OP} V.I. Ogievetski, I.V. Polubarinov: On the meaning of gauge invariance.
  Nuovo Cim. 23 (1962) 173--180.
\bibitem
  {LV} K.-H. Rehren:
  On the effect of derivative interactions in quantum field theory. 
  Lett.\ Math.\ Phys.\ 115 (2025) 3, and Erratum: Lett.\ Math.\
  Phys.\ 115 (2025) 72. 
\bibitem
  {aut} K.-H. Rehren, L.T. Cardoso, C. Gass, J.M. Gracia-Bond\'ia,
  B. Schroer, J.C. V\'arilly:
  sQFT: an autonomous explanation of the interactions of quantum particles.
  Found.\ Phys.\ 54 (2024) 57.
\bibitem
  {why} S. Rivat: Wait, why gauge?
  Br. J. Philos. Sci. (2024). https://doi.org/10.1086/727736.
\bibitem
  {Sa} P.L. Saldanha: Local description of the Aharonov-Bohm effect with a quantum
  electromagnetic field.
  Found. Phys. 51 (2021) 6.
\bibitem
  {PGI3}  G. Scharf: Gauge Field Theories: Spin One and Spin Two
  (Dover, 2016).
\bibitem
  {Alter} B. Schroer: An alternative to the gauge theoretic setting.
  Found. Phys. 41 (2011) 1543--1568. 
\bibitem
  {Schroer}B. Schroer: Peculiarities of massive vector mesons and
  their zero mass limits. 
  Eur. Phys. J. C75 (2015) 365.
\bibitem
  {Wein} S. Weinberg: Feynman rules for any spin.
  Phys. Rev. 133 (1964) B1318--B1332,
  and Part II: Massless particles.
  Phys. Rev. 134 (1964) B882--B896.   
\end{thebibliography}
\end{document}